\theoremstyle{plain}
\newtheorem{allcnt}{AllCnt}
\newtheorem{theorem}[allcnt]{Theorem}
\newtheorem{corollary}[allcnt]{Corollary}
\newcommand{\Prob}{\mathbb{P}}
\newcommand{\E}{\mathbb{E}}
\newcommand{\Var}{\text{Var}}
\newcommand{\Cov}{\text{Cov}}
\newcommand{\Ind}{\mathbb{I}}
\newcommand{\cE}{\mathcal{E}}
\newcommand{\cS}{\mathcal{S}}
\newcommand{\be}{{\bf e}}
\newcommand{\bw}{{\bf w}}
\newcommand{\bv}{{\bf v}}
\newcommand{\bz}{{\bf z}}
\newcommand{\ATE}{\text{ATE}}
\newcommand{\TTE}{\text{TTE}}
\newcommand{\AIE}{\text{AIE}}
\newcommand{\est}{\hat{\text{\normalfont est}}}
\newcommand{\indep}{\rotatebox[origin=c]{90}{$\models$}}
\begin{document}
	\title{Estimating Total Treatment Effect in Randomized Experiments with Unknown Network Structure}
	\author{Christina Lee Yu \footnote{School of Operations Research and Information Engineering, Cornell University, Ithaca, NY}
\,\, Edoardo M Airoldi \footnote{Department of Statistics, Operations, and Data Science, Fox School of Business, Temple University, Philadelphia, PA} 
\,\, Christian Borgs  \footnote{Department of Electrical Engineering and Computer Science, UC Berkeley, Berkeley, CA}
\,\, Jennifer T Chayes \footnote{Departments of Electrical Engineering and Computer Science, Statistics, Mathematics, and the School of Information, UC Berkeley, Berkeley, CA}
}
	\date{}
	\maketitle

	\begin{abstract}
Randomized experiments are widely used to estimate the causal effects of a proposed treatment in many areas of science, from medicine and healthcare to the physical and biological sciences, from the social sciences to engineering, to public policy and to the technology industry at large.
Here, we consider situations where classical methods for estimating the total treatment effect on a target population are considerably biased due to confounding network effects, i.e., the fact that the treatment of an individual may impact their neighbors' outcomes, an issue referred to as network interference or as non-individualized treatment response.
A key challenge in these situations, is that the network is often unknown, and difficult, or costly, to measure. 
In this paper, we characterize the limitations in estimating the total treatment effect without knowledge of the network that drives interference, assuming a potential outcomes model with heterogeneous additive network effects. This model encompasses a broad class of network interference sources, including spillover, peer effects, and contagion. 
Within this framework, we show that, surprisingly, given access to average historical baseline measurements prior to the experiment, we can develop a simple estimator and efficient randomized design that outputs an unbiased estimate with low variance. Our solution  does not require knowledge of the underlying network structure, and it comes with statistical guarantees for a broad class of models. 
We believe our results are poised to impact current randomized experimentation strategies due to its ease of interpretation and implementation, alongside its provable theoretical insights under heterogeneous network effects.
	\end{abstract}
	
\section{Introduction}
The measurement of treatment effects via randomized experiments is a fundamental tool used across all fields of scientific disciplines and beyond. For example, consider a public health campaign to increase public awareness of the importance of wearing masks during a global pandemic. The administrator in charge of running the public health campaign would like to determine which proposed banner ad would be most effective for displaying on a public billboard. In particular, the administrator would like to estimate the ``total treatment effect'', i.e. the change in behavior of the population at large that results from viewing the proposed banner ad. The ``total treatment effect'' is a causal effect, as it describes the change in behavior that is caused by the treatment. The experimental units in this example are the individuals in the population, and outcomes refer to some measurable behavior of individuals, such as whether an individual is wearing a mask or not at the grocery store.

The classical approach to estimating causal effects involves running a randomized experiment, where one randomly partitions the population into a treatment group and a control group. The treatment of interest is administered to each individual in the treatment group, and a placebo is administered to each individual in the control group. The causal effect is then approximated by the difference in measured outcomes or behaviors between the treatment and control groups after the treatment has been administered. This approach results in an efficient unbiased estimate for the desired causal effect under a critical assumption that the outcome of an individual is not affected by the treatment assignment of any other individual; this assumption is referred to in the literature as the Stable Unit Treatment Value Assumption (SUTVA) \cite{cox1958planning,rubin1978bayesian,Manski13}. 

Unfortunately, SUTVA is violated in many applications, as individuals are connected in a complex social network that mediates communication, influence, or spread of disease, resulting in network interference that couple the outcomes of individuals. The treatment of individual A may impact the outcome of individual B, violating SUTVA and introducing significant bias to the estimates resulting from the classical experimental approach of randomizing treatment and control uniformly over individuals, and estimating the difference in average outcomes of the treatment and control group. Furthermore, the network which mediates the interference effects is often unknown and difficult or costly to estimate. As a result, we need new theory for experimental design which can account for these network interference effects, and yet are simple and practical to implement. We illustrate a few motivating scenarios. 

Consider estimating the total treatment effect of a public health campaign to increase the use of masks in public during a pandemic. Suppose that individual A, a senior citizen, is shown the proposed banner ad, and thus decides to wear a mask in public. Individual A's behavior could cause a positive network effect on their friends,  
even though they did not see the original banner ad. In contrast, individual A's behavior may have a negative effect on a teenager, who may think that wearing masks must be for the elderly and thus not ``cool''. 

Consider a social media platform such as LinkedIn, which would like to estimate the total treatment effect of a proposed change in the recommendation engine for a user's news feed, with a goal of increasing user engagement. The change in engagement level of individual A as a result of being exposed to the proposed change could subsequently impact the engagement level of others in individual A's social network, resulting in a positive or negative network effect. Similar network effects arise in other types of communication networks beyond social media platforms, including mobile networks, email exchange networks, and collaboration communities.

Consider running a clinical trial to estimate the total treatment effect of a proposed vaccine for COVID-19, i.e. how much would the overall rate of cases contracted in the public at large decrease as a result of everyone receiving the vaccine. Since COVID-19 is transmitted via an underlying social contact network, the impact of individual A receiving the vaccine may not only reduce individual A's chance of contracting the disease, but also may reduce the risk of exposure of other individuals connected to A in the network. This network effect is heterogeneous as the frequency of time individual A spends with others in its contact network may vary.

\section{Problem Setup and Potential Outcomes Model}

Consider a finite population of $n$ individuals. We denote the treatment vector by $\bz = (z_1, z_2, \dots z_n) \in \{0,1\}^{[n]}$, where $z_i = 1$ if individual $i$ is assigned the treatment and $z_i = 0$ if individual $i$ is in the control group. Let ${\bf e}_i$ denote the standard basis vector which takes value one at coordinate $i$ and is zero everywhere else. 
As we consider the randomized experiment setting, we assume that the treatment vector $\bz$ is sampled from a prescribed distribution as determined by the experimental design; this distribution is referred to as the randomized design. In many practical applications, there is a limit on the fraction of individuals that can be assigned to the treatment group, whether because of a high cost for testing a new treatment, or due to safety considerations of limiting possible unknown adverse effects. Therefore, a desired solution involves proposing an estimator alongside a randomized design for which we can achieve consistent estimation while keeping the expected number of treated individuals low.

$Y_i(\bz)$ denotes the potential outcome of individual $i$ in the event that treatment vector $\bz$ is implemented. Only the outcomes for the implemented treatment vector $\bz$ are observed, and thus all other ``potential outcomes'' that would result from other realizations of the treatment vector are unobserved. 
Under the Stable Unit Treatment Value Assumption, the potential outcome of individual $i$ only depends on $z_i$ and not on the treatment of any other individual \cite{cox1958planning,rubin1978bayesian,Manski13}. Under this assumption, $Y_i(\bz) = Y_i(z_i {\bf e}_i)$ for all $\bz$. In the presence of general arbitrary network interference, the outcome of individual $i$ may depend on the full treatment vector.
Our results rely on the neighborhood interference assumption alongside joint assumptions of additivity of main effects and interference effects as defined in \cite{SussmanAiroldi17}, which we also refer to as the heterogeneous additive network effects assumption.

\subsection{Heterogeneous Additive Network Effects}
The neighborhood interference assumption posits that an individual's outcome can only depend on the treatment assignments of its direct neighbors in a specified network $\cE$ \cite{UganderKarrerBackstromKleinberg13, SussmanAiroldi17}. We will furthermore assume that the network is unknown. The joint assumptions of additivity of main effects and interference effects as defined in \cite{SussmanAiroldi17} impose that the potential outcomes satisfy
\[Y_i(\bz) = Y_i(0) + (Y_i(\be_i) - Y_i(0)) + \textstyle\sum_{k\in [n]} (Y_i(\be_k) - Y_i(0)).\]
This enforces that the outcome of each individual is affected by an additive term for each treated individual in the population, but this additive network interference effects can be fully heterogenous for each pair of individuals. 
By letting $\alpha_i$ denote the individual baseline outcome $Y_i(0)$, $\beta_i$ denote the individual direct effect $(Y_i(\be_i) - Y_i(0))$ and $\gamma_{ki}$ denote the additive network interference effect over the directed pair $(k,i)$ given by $(Y_i(\be_k) - Y_i(0))$, it follows that the potential outcomes model is equivalently represented as
\[Y_i(\bz) = \alpha_i + \beta_i z_i + \textstyle\sum_{k \in [n]} \gamma_{ki} z_k.\]
This model trivially satisfies the neighborhood interference assumption with respect to  the network edge set $\cE$ defined as the set of pairs $(k,i)$ where $\gamma_{ki} \neq 0$. The total number of model parameters is $2n + |\cE|$. Without imposing constraints on the sparsity of $\cE$, $|\cE|$ could be as large as $n (n-1)$.

The model at a glance looks similar to a linear model, yet a key distinction is that our model allows for fully heterogeneous network effects individualized to each edge, such that the number of parameters grows with the population size. This model is also referred to as the linear model in \cite{EcklesKarrerUgander17} or the saturated structural linear model in \cite{hu2022average}. 
Our model is significantly more expressive than typical linear models used in the empirical literature which impose that there are a fixed number of types of individuals which then share the same network effect coefficients, such that the number of model parameters is fixed with respect to the population size.

\subsection{Target Estimand}
There are many potential estimands of interest; we focus on the total treatment effect (TTE), defined as the difference between the average outcome if all individuals were treated and the average outcome if nobody were treated.
\[\text{TTE} := \tfrac{1}{n} \textstyle\sum_{i \in [n]} (Y_i({\bf 1}) - Y_i({\bf 0})),\]
where ${\bf 1}$ denotes the vector of all ones and ${\bf 0}$ denotes the vector of all zeros. Under heterogeneous additive network effects, the TTE estimand takes value
\[\TTE = \tfrac{1}{n}\left(\textstyle\sum_i \beta_i + \textstyle\sum_{ki} \gamma_{ki}\right).\]

The total treatment effect is particularly relevant in scenarios in which a decision maker can run a randomized experiment with a limited treatment budget, and desires to use the outcome of the experiment to determine whether to fully adopt the new treatment, or to stay with the status quo. The challenge is that the decision maker would like to estimate the outcomes under the all ones treatment vector, but due to a limited budget, it can only observe outcomes under a limited treatment budget. The total treatment effect can be decomposed into a direct effect, a network interference effect, and an interaction effect \cite{SussmanAiroldi17}. The direct effect captures the change in outcomes of an average individual due to itself being treated.
The network interference effect captures the change in outcomes of an average individual due to the network (excluding itself) being treated.
The interaction effect is nonzero in scenarios when the effect of interference on an individual may depend on whether the individual is treated or not. 

Some previous work has focused on estimating the direct effect \cite{BasseAiroldi15, JagadeesanPillaiVolfovsky17, SavjeAronowHudgens17, SussmanAiroldi17, leung2019causal, ma2021causal} or testing for the presence of network interference \cite{Aronow12, BowersFredricksonPanagopoulos12,AtheyEcklesImbens17,PougetAbadieSaveskiSaintJacquesDuanXuGhoshAiroldi17,saveski2017detecting}; these methods do not produce an estimate of the total treatment effect. The techniques for hypothesis testing in \cite{Aronow12, AtheyEcklesImbens17} do not immediately extend to estimation as they are based on randomization inference with a fixed network size and studied testing sharp null hypotheses.
While this paper focuses on estimating the total treatment effect, we show results for the direct treatment effect and the network interference effect in the supplementary material.

\subsection{Class of Estimators}

A primary question of this work is to understand whether one can estimate total treatment effect in the presence of network interference, particularly when the network structure is unknown and costly to estimate, which is often the case in many real world applications. As a result, we consider the following class of individually weighted linear estimators, which have the form
\[ \est(\bw,\bv) = \textstyle\sum_{i \in [n]} \left(w_i z_i + v_i (1 - z_i)\right) Y_i(\bz), \]
where the weights $\bw = (w_1, w_2, \dots w_n)$ and $\bv = (v_1, v_2, \dots v_n)$ are deterministic and not a function of the treatment $\bz$. Most notably, the weight $w_i$ or $v_i$ are only selected based on whether individual $i$ is treated or not, and does not depend on the treatment configuration of its neighbors. 

The focus on linear estimators is not restrictive, as the majority of all estimators proposed in the literature are indeed linear in the measured outcomes. However, the limitation that the weights that multiply an individual's outcome $Y_i(\bz)$ only depends on that individual's treatment $z_i$ is a significant restriction, and arises from the limitation that we do not have knowledge of the network. In contrast, the Horvitz-Thompson estimator under general neighborhood interference is a linear estimator where the weight that multiplies $Y_i(\bz)$ is a function of the treatments of all neighbors of $i$ in addition to $i$ itself; this necessarily requires knowledge of the neighborhood \cite{UganderKarrerBackstromKleinberg13}. All estimators previously studied in the network interference literature require some knowledge of the network.

Under the simplifying SUTVA condition, many classic estimators are in fact individually weighted linear estimators.
For example the Horvitz-Thompson estimator under SUTVA sets $w_i = 1/(n\E[z_i])$ and $v_i = 1/(n\E[1-z_i])$. The difference in means estimator sets $w_i = 1/\sum_{j \in n} z_j$ and $v_i = 1/\sum_{j \in n} (1-z_j)$, which are deterministic for randomized designs in which the total number of individuals under treatment and control are fixed.

\subsection{Discussion of Model Assumptions}

Our model assumes a finite population of size $n$ with arbitrary values for $\alpha_i, \beta_i, \gamma_{ki}$. Since the model allows for any abitrary values, it also can capture a model in which these unknown parameters are generated from an underlying stochastic process, with respect to which the parameters across individuals could be correlated. For example, in an application such as estimating the efficacy of a vaccine, an underlying network mediates the spread of the epidemic under both control and treatment scenarios. If $\{s_i\}_{i \in [n]}$ describes the initial seeds of an infection, and $T_{ji}$ describes the accumulated transmission (potentially over multiple hopes of the network) from a seed $j$ to individual $i$, then the baseline outcomes would take the form of $\alpha_i = \sum_{j \in [n]} T_{ji} s_j$. The baseline parameters $\alpha_i$ are thus correlated with respect to the shared dependence on the random initial seeds $s_j$, but as the baseline parameters are assumed to be indepedent from the treatment assignments, our results will still hold. In particular, our analysis considers the randomness of the outcomes with respect to the treatment vector $\bz$ conditioned on the realized baseline parameters $\alpha_i$, treating them as constant.

This model can also capture network interference that arises from spillover, peer effects, and contagion. Spillover refers to the interference that arises from individual $j$'s treatment affecting individual $i$'s outcome. Typically the spillover effect is assumed to be mediated by the network, such that $\gamma_{ji}$ is non-zero only if $(j,i)$ is an edge in the network. By relaxing constraints on the sparsity of the $\gamma_{ki}$ parameters, the heterogeneous additive network effects assumption can also capture long-range spillover effects mediated by multi-hop paths in the network. 

Contagion or peer effects refer to interference that arises from individual $j$'s outcome affecting individual $i$'s outcome. When the contagion effect is linear, then this translates into long-range interference effects over multi-hop paths in the network. For example, consider a path $\ell \to k \to j \to i$ in the network. Under linear contagion, individual $\ell$'s treatment affects individual $\ell$'s outcome, which subsequently affects individual $k$'s outcome, which subsequently affects individual $j$'s outcome, which then affects individual $i$'s outcome, as described by
\[Y_i({\bf z}) = a_i + b_i z_i + \textstyle\sum_{k \in [n]} c_{ki} Y_k({\bf z}).\]
The potential outcomes model can be derived by solving a system of linear equations for the outcomes vector given an assigned treatment vector, which results in the following potential outcomes model
\[{\bf Y}({\bz}) = (I - C)^{-1} {\bf a} + \textstyle\sum_{t=0}^{\infty} C^t \cdot \text{diag}({\bf b}) \cdot {\bf z},\]
where $C$ is a matrix with the $(k,i)$-th entry equal to $c_{ki}$, $\text{diag}({\bf b})$ is a diagonal matrix with diagonal entries taking values from ${\bf b}$, and ${\bf a}$ and ${\bf b}$ are vectors corresponding to the parameters $\{a_i\}_{i\in[n]}$ and $\{b_i\}_{i\in[n]}$.
Written in this form, we can verify that the potential outcomes could be described via a heterogeneous additive network effects model with dense network effects, as is also observed in \cite{EcklesKarrerUgander17,hu2022average}.

As there are more unknown model parameters than measurements, we cannot hope to identify the model via regression, and thus a randomized experimental design will be critical to any solution. Previous attempts at causal inference under this model either involve complicated network dependent randomized designs \cite{EcklesKarrerUgander17}, incur potentially high network-dependent biases \cite{EcklesKarrerUgander17}, or impose Bayesian priors on the unknown parameters that reduce the statistical estimation task to again estimating a model with a fixed number of parameters \cite{ToulisKao13,BasseAiroldi15}. 

Scenarios that violate linearity include when network effects saturate after a certain number of neighbors are treated, are sublinear in the number of treated neighbors, or are only present after a minimum number of neighbors are treated. Linearity is naturally violated if the measured outcome variable is binary-valued. As a result, our model is more suited to settings where $Y_i$ takes a spectrum of values, such as representing the viral load an individual has accumulated rather than the individual's binary infection status.

\section{Alternate Approaches in the Literature}

A critical challenge for estimating the total treatment effect is that we only observe $\{Y_i(\bz)\}_{i \in [n]}$ for a single fixed treatment vector $\bz$, which is not ${\bf 1}$ or ${\bf 0}$. As a result, we may not observe any of the terms in the expression of interest. Under a fully general arbitrary interference model, it has been repeatedly shown that it is impossible to estimate any desired causal estimands as the model is not fully identifiable \cite{Manski13, AronowSamii17, BasseAiroldi17, karwa2018systematic}. As a result, there have been many proposed models that impose assumptions on exposure functions \cite{Manski13, AronowSamii17, viviano2020experimental, auerbach2021local, li2021causal}, interference neighborhoods \cite{UganderKarrerBackstromKleinberg13, bargagli2020heterogeneous, SussmanAiroldi17, pmlr-v115-bhattacharya20a}, parametric structure \cite{ToulisKao13, BasseAiroldi15, cai2015social, GuiXuBhasinHan15,EcklesKarrerUgander17}, or a combination of these. Each of these assumptions leads to a different solution concept. The art in choosing a good model is balancing the tension between strong assumptions which facilitate simple solutions, and weak assumptions which can more flexibly encompass real world applications. Furthermore, studies show that one must exercise caution in choosing model assumptions, as the results may be sensitive to model misspecification \cite{AronowSamii17,karwa2018systematic}. 

All previously proposed approaches critically rely on using knowledge of a network mediating the interference effects, which is often not available in practice. We highlight a few of the most common models to highlight the strengths and weaknesses of each approach. In complement to the below works, there are ongoing empirical studies assessing the performance difference between an experiment design that leverages the network implicitly, as opposed to a method that measures the network and leverages the measured network \cite{Shakyae012996}.

\subsection{Partial Interference} 
Partial interference assumes that the population can be partitioned into disjoint groups, such that all network interference effects can only occur within but not across the pre-specified groups \cite{Sobel06,Rosenbaum07,HudgensHalloran08,TchetgenVanderWeele12,LiuHudgens14,VanderweeleTchetgenHalloran14,pmlr-v115-bhattacharya20a, auerbach2021local}. Specifically, the outcome of individual $i$ can only depend on the treatment of others in the same group as individual $i$, and is independent from the treatment assignments of individuals in other groups. Under this assumption, we can randomize treatments over the groups jointly so that all individuals in each group are either assigned jointly to treatment or control. As a result, $Y_i(\bz) = Y_i({\bf 1})$ for all $i$ such that $z_i = 1$, and $Y_i(\bz) = Y_i({\bf 0})$ for all $i$ such that $z_i = 0$. 
Unfortunately, this approach does not apply when the network could be highly connected, limiting its use in practice. The bias of standard estimators will scale with the number of edges across clusters, leading to proposed cluster randomized designs that randomize over clusters that are constructed to minimize edges between clusters \cite{GuiXuBhasinHan15, EcklesKarrerUgander17}.

\subsection{Neighborhood Interference}
Under the neighborhood interference assumption, $Y_i(\bz) = Y_i({\bf 1})$ for any individual in the treatment group whose neighbors are also all in the treatment group; we denote this set of individuals $\cS_1(\bz)$. Similarly, $Y_i(\bz) = Y_i({\bf 0})$ for any individual in the control group whose neighbors are also all in the control group; we denote this set $\cS_0(\bz)$. Without imposing any further assumptions, a natural estimate for the total treatment effect is the difference in average outcomes between groups $\cS_1(\bz)$ and $\cS_0(\bz)$, or an inverse probability weighted estimator when the probability of being in group $\cS_1(\bz)$ or $\cS_0(\bz)$ may vary across individuals \cite{AronowSamii17}. Without further structure on the interference, one cannot use measurements from individuals not in either sets $\cS_1(\bz)$ or $\cS_0(\bz)$, as the relationship between $Y_i(\bz)$ and $Y_i({\bf 1})$ or $Y_i({\bf 0})$ is unknown.

Under na\"ive randomized designs such as a Bernoulli design that assigns each individual independently to treatment with probability $p$ or control with probability $1-p$, the variance of the inverse probability weighted estimator 
will go to infinity with $n$ for well-connected networks such as the Erdos-Renyi graph with average degree larger than 
$\sqrt{n}$ \cite{BasseAiroldi17}; this results from the fact that with high probability the sizes of sets $\cS_1(\bz)$ and $\cS_0(\bz)$ will be small for highly connected networks.
As a result, \cite{UganderKarrerBackstromKleinberg13} proposes a graph cluster randomized design that aims to jointly assign individuals and their neighbors to treatment or control in order to minimize variance. Unfortunately this requires detailed knowledge of the network, and constructing optimal clusters can be computationally expensive for non-trivial well-connected networks. As a result, this approach has not translated into practical solutions.

An alternate approach suggested by \cite{hu2022average,li2022random} is to instead consider weaker estimands, which essentially capture the marginal treatment effect of perturbing a status quo treatment assignment distribution. In order to show a central limit theorem styled result, \cite{li2022random} imposes a generative distribution over the network structure itself, and considers how to exploit the regularity in the network that arises from low rank structure. While this enables one to consider networks of increasing size, it may only be plausible in applications in which one can reasonably model the finite network as being sampled from a known generative model.

\subsection{Linear Model with fixed number of  parameters} 

While the above models impose network-based conditions on the interference, an alternate approach is to impose parametric structure on the form of the potential outcomes. The most common assumption is that the potential outcomes are linear with respect to a specified statistic of the local neighborhood \cite{ToulisKao13,GuiXuBhasinHan15,BasseAiroldi15,cai2015social,parker2016optimal,chin2019regression}. For example, \cite{cai2015social} assumes the outcome is linear in the fraction of treated neighbors, such that
\[Y_i(\bz) = \alpha + \beta z_i + \gamma \Big(\frac{\sum_{k \in [n]} A_{ki} z_k}{\sum_{k \in [n]} A_{ki}}\Big),\]
where $A$ is a known adjacency matrix representing edges in the network. Similarly, \cite{parker2016optimal} assumes linearity with respect to the absolute number of treated neighbors. Threshold models also can be expressed with a linear model using indicator statistics. For example \cite{GuiXuBhasinHan15} assumes that network effects arise when at least $\theta$ neighbors are treated, where $\theta$ is assumed to be known,
\[Y_i(\bz) = \alpha + \beta z_i + \gamma ~\Ind\big(\textstyle\sum_{k \in [n]} A_{ki} z_k \geq \theta\big).\]
One can extend these models to incorporate covariate types, such that the total number of unknown parameters is three times the number of different covariate types, assuming that each covariate type is associated to a set of parameters $\alpha,\beta,\gamma$.

What is characteristic of this approach is that the assumptions reduce the number of unknown parameters in the potential outcomes models to a fixed dimension that does not grow with the population size, reducing the inference task to linear regression. As a result, the natural solution is to use a least squares estimate, shifting the focus to constructing randomized designs that minimize the variance of the estimate. 
A limitation of this approach is that it requires the correct choice of the the statistic governing the linearity, and it requires precise knowledge of the network structure to compute these neighborhood statistics. Furthermore, it assumes knowledge of the relevant covariate types that differentiate individual responses, or otherwise assumes homogeneity in the effects. 

\begin{figure}
	\centering
	\includegraphics[width=.5\linewidth]{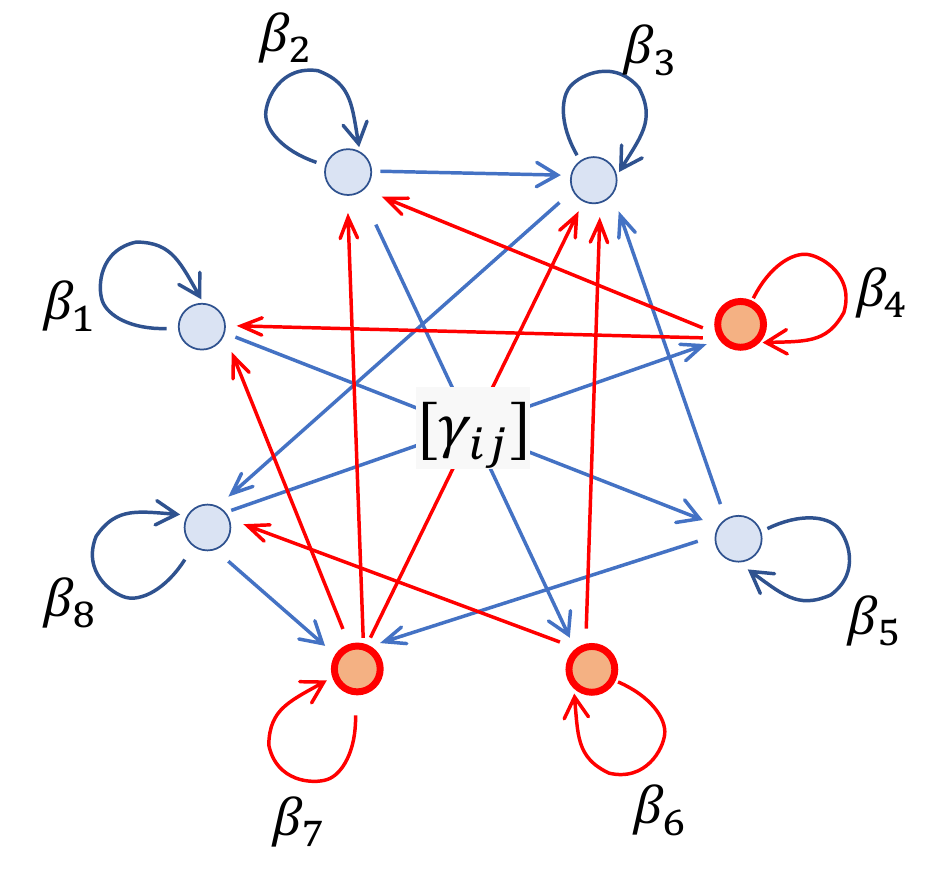}
	\caption{Depiction of model under heterogeneous additive network effects: Vertex weights correspond to baseline outcomes. When individual $i$ is treated (depicted in red), its outgoing edges are activated. Individual $j$'s outcome $Y_j(\bz)$ is the sum of its own baseline plus any incoming activated edges (depicted in red).}
	\label{fig:model}
\end{figure}

\section{Summary of Our Results}

Our results focus on estimating the total treatment effect without knowledge of the network under a heterogeneous additive network effects model, which is significantly more expressive than parametric model classes, where the number of parameters does not scale with the population size. Under our model, the total treatment effect scales linearly in the fraction of treated individuals; our approach exploits this linearity for a simple and efficient solution. Our result is the first to propose a solution without requiring any knowledge of the network. We believe this combination of a practical solution with a flexible model positions our results to have impact in the broader scientific community.

The primary research question is: {\em Does there exist a simple and efficient solution for estimating total treatment effect in the presence of network effects without critically relying on knowledge of the network structure or restrictive network properties?} While this has previously remained elusive, our results provide a clear and simple answer, including both a negative scenario in which there can be no simple solution, and a positive scenario in which we outline a simple efficient solution which can easily translate into practice.  

First, we show that in the presence of additive network effects, any individually weighted linear estimator for the total treatment effect (TTE) is necessarily biased unless the network can be perfectly partitioned into small disjoint subsets with no interfering edges. Furthermore, this bias can be large depending on the relative magnitude of the network effects. This negative result suggests that the linearity arising from the additive network effects model is not sufficient in itself to admit simple estimators which do not utilize knowledge of network structure. The primary reason is that it is difficult for simple estimators to distinguish between the response due to baseline values $\{\alpha_i\}_{i \in [n]}$ as opposed to network effects $\{\gamma_{ji}\}_{(j,i) \in \cE}$. 

Second, we consider the scenario when we have access to an estimate of the average individual baselines; in practice this could be constructed from historical data or pilot studies. Given baseline estimates, we propose a simple estimator for the total treatment effect which computes the average outcomes amongst the entire population after applying the treatment vector, scales the average outcome by the size of the total population divided by the number of treated individuals, and then subtracts the average baseline estimate. This estimator is unbiased for any randomized design in which the marginal probability of an individual being treated is equal amongst different individuals in the population, an easy condition to satisfy as it only involves matching individual marginal treatment probabilities. This estimator is extremely easy to compute, and neither the randomized designs nor the estimate itself require knowledge of the underlying network, which is often not available in practice. 

Third, we show that our proposed approach has low variance under a simple completely randomized design. In particular, the estimator is consistent as long as the fraction of treated individuals is asymptotically larger than $d^2_{\max}/n$, where $d_{\max}$ is the maximum out-degree of any individual in the network, i.e. the maximum influence of any individual in the network. Furthermore, we provide analytical expression for the variance of our estimator under commonly used randomized designs, including completely randomized and cluster randomized design, as well as uniform and varying saturation designs. These variance expressions provide insight for designing randomized designs that minimize variance by matching individuals based on estimated network influence.

\section{Additivity is not sufficient for unbiased estimators}

Assuming heterogeneous additive network effects implies that the total treatment effect scales linearly in the number of treated individuals. A natural question is whether this additive model is sufficient to admit simple unbiased estimators for the total treatment effect or not. In this section we provide some results in the negative by considering the restricted class of individually weighted linear estimators. 

\begin{theorem} \label{thm:TTE}
	Under heterogeneous additive network effects, any unbiased individually weighted linear estimator for total treatment effect must have the form 
	\[ \widehat{\TTE} = \frac{1}{n}\sum_{i \in [n]} \left(\frac{z_i}{\E[z_i]} - \frac{1 - z_i}{\E[1-z_i]}\right) Y_i(\bz), \]
	and the randomized design must satisfy $\Prob(z_k = z_i) = 1$ for all $(k,i) \in \cE$.
	As a result, there does not exist an unbiased individually weighted linear estimator for the total treatment effect if the network is fully connected.
\end{theorem}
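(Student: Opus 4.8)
The plan is to exploit the fact that unbiasedness must hold simultaneously for \emph{all} admissible parameter configurations $\{\alpha_i\},\{\beta_i\},\{\gamma_{ki}\}$, and to treat the requirement $\E[\est(\bw,\bv)]=\TTE$ as an identity in these free parameters. First I would substitute the potential outcomes model into the estimator and take the expectation over $\bz$, using $z_i^2=z_i$ and $z_i(1-z_i)=0$, to obtain
\begin{align*}
\E[\est(\bw,\bv)] &= \sum_{i}\big(w_i\E[z_i]+v_i\E[1-z_i]\big)\alpha_i + \sum_i w_i\E[z_i]\,\beta_i \\
&\quad + \sum_{i}\sum_{k\neq i}\big(w_i\E[z_iz_k]+v_i\E[(1-z_i)z_k]\big)\gamma_{ki}.
\end{align*}
Since $\alpha_i$ and $\beta_i$ range freely while $\TTE=\tfrac1n(\sum_i\beta_i+\sum_{ki}\gamma_{ki})$ carries no $\alpha_i$ term, matching coefficients forces $w_i\E[z_i]+v_i\E[1-z_i]=0$ and $w_i\E[z_i]=\tfrac1n$ for every $i$. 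Solving this pair gives exactly $w_i=1/(n\E[z_i])$ and $v_i=-1/(n\E[1-z_i])$, which is the claimed form; I would also note these are well defined precisely when $0<\E[z_i]<1$, a necessary condition for any valid estimator in the class.

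It remains to match the coefficient of each free parameter $\gamma_{ki}$ with $(k,i)\in\cE$ against its target $\tfrac1n$. Writing $p_i=\E[z_i]$ and $s=\E[z_iz_k]$ and substituting the weights just derived, the identity $w_i\E[z_iz_k]+v_i\E[(1-z_i)z_k]=\tfrac1n$ reduces, after clearing denominators, to the single moment equation $s=p_i(1-p_i+p_k)$, equivalently $\Cov(z_i,z_k)=\Var(z_i)$.

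The core of the argument is then to upgrade this moment equation to the almost-sure statement $\Prob(z_k=z_i)=1$. The key tool is the elementary (Fr\'echet--Hoeffding) bound $\E[z_iz_k]\le\min(p_i,p_k)$, valid since $z_iz_k\le z_i$ and $z_iz_k\le z_k$ pointwise. Applying $s\le p_i$ to $s=p_i(1-p_i+p_k)$ yields $p_k\le p_i$, while $s\le p_k$ yields $p_i\le p_k$; hence $p_i=p_k$ and $s=p_i=p_k$. Finally $\E[z_i(1-z_k)]=p_i-s=0$ and $\E[(1-z_i)z_k]=p_k-s=0$ with nonnegative integrands force $z_i(1-z_k)=(1-z_i)z_k=0$ almost surely, i.e.\ $z_i=z_k$ almost surely. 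This is the only genuinely delicate step: the algebra alone gives a covariance identity, and one must recognize that for Bernoulli variables this identity saturates a two-sided bound and therefore encodes perfect dependence rather than merely high correlation.

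For the concluding statement, if the network is fully connected then $(k,i)\in\cE$ for every ordered pair, so the design must satisfy $z_i=z_k$ almost surely for all $i,k$; by transitivity $\bz$ is almost surely constant across coordinates, i.e.\ $\bz\in\{\bZero,\bOne\}$ with probability one. Such an all-or-nothing assignment never realizes both a treatment and a control group, so it does not constitute a valid randomized experiment, and consequently no individually weighted linear estimator can be unbiased under a genuine (non-degenerate) design. I expect the two main obstacles to be the almost-sure upgrade in the third paragraph and, for the corollary, articulating precisely why the degenerate global design is excluded as an admissible experiment.
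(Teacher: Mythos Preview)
Your proposal is correct and follows the paper's overall strategy of treating unbiasedness as an identity in the free parameters $\{\alpha_i\},\{\beta_i\},\{\gamma_{ki}\}$ and matching coefficients. The derivation of $w_i$ and $v_i$ is identical to the paper's.

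The only notable difference is in how you handle the $\gamma$-constraint. The paper rewrites it in terms of conditional probabilities,
\[
\tfrac{1}{n}\Prob(z_k=1\mid z_i=1)-\tfrac{1}{n}\Prob(z_k=1\mid z_i=0)=\tfrac{1}{n},
\]
and observes that a difference of two numbers in $[0,1]$ equaling $1$ forces the extremes $\Prob(z_k=1\mid z_i=1)=1$ and $\Prob(z_k=1\mid z_i=0)=0$, giving $z_i=z_k$ a.s.\ in one stroke. Your route via the Fr\'echet--Hoeffding bound $\E[z_iz_k]\le\min(p_i,p_k)$ is equally valid but slightly less direct: you first deduce $p_i=p_k$, then $s=p_i$, then the almost-sure conclusion. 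Both arguments are short; the paper's conditional-probability rewrite is a touch cleaner, while yours has the minor advantage of extracting the intermediate fact $\E[z_i]=\E[z_k]$ explicitly. For the fully-connected corollary, you are in fact more careful than the paper in articulating why the degenerate all-or-nothing design must be excluded.
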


\begin{proof}
	Under the heterogeneous additive network effects model, an individually weighted linear estimator takes the value
	\[\est(\bw,\bv) = \textstyle\sum_{i \in [n]} (w_i z_i + v_i (1-z_i)) \alpha_i + \textstyle\sum_{i \in [n]} w_i z_i \beta_i \]
	\[\qquad+ \textstyle\sum_{(k,i) \in \cE} (w_i z_i + v_i (1-z_i)) z_k \gamma_{ki}.\]
	This is an unbiased estimator for the total treatment effect only if $\E[\est(\bw,\bv)] = \frac{1}{n}\sum_{i \in [n]} \beta_i + \frac{1}{n} \sum_{(k,i) \in \cE} \gamma_{ki}$ is satisfied for any configuration of $\{\alpha_i\}_{i \in [n]}, \{\beta_i\}_{i \in [n]}$, and $\{\gamma_{ki}\}_{(k,i) \in \cE}$. This requirement results in the following $2n + |\cE|$ constraints, which arise from matching coefficients for each of the parameters,
	\begin{itemize}
		\item $\alpha$: for all $i \in [n]$, $w_i \E[z_i] + v_i \E[1 - z_i]= 0$,
		\item $\beta$: for all $i \in [n]$, $w_i \E[z_i] = \frac{1}{n}$,
		\item $\gamma$: for all $(k,i) \in \cE$, $w_i \E[z_i z_k] + v_i \E[(1-z_i) z_k] = \frac{1}{n}$.
	\end{itemize}
	The second set of constraints for the direct treatment effects imply that the weights are $w_i = 1/(n\E[z_i])$. As a result, combining this with the first set of constraints for the baselines imply that $v_i = - 1/(n\E[1 - z_i])$. After fixing the values of all the weights $\bw, \bv$, the third set of constraints become difficult to satisfy. We can rewrite the third set of constraints as 
	\[w_i \E[z_i] \Prob(z_k = 1 | z_i = 1) + v_i \E[1-z_i] \Prob(z_k = 1 | z_i = 0) = \frac{1}{n},\]
	for all $(k,i) \in \cE$. Most notably, it is a linear combination of the two terms that show up in the first and second set of constraints, multiplied by probabilities that must be in [0,1]. By plugging in the values for $w_i$ and $v_i$ that arise from the first two constraints, it follows that the third constraint is only satisfied when $\Prob(z_k = 1 | z_i = 1) = 1$ and $\Prob(z_k = 1 | z_i = 0) = 0$ for all $(k,i) \in \cE$. This is equivalent to requiring that the randomized design always assigns connected individuals to the same treatment, i.e. $\Prob(z_k = z_i) = 1$ for all $(k,i) \in \cE$.
\end{proof}

The constraint on the randomized design implies that every pair of connected individuals in the population must be either both treated or both in control. This restricts the valid randomized designs to a cluster randomized design where the clusters are defined by the connected components of the graph. Theorem \ref{thm:TTE} highlights that the imposed structure from heterogeneous additive network effects is insufficient to remove the complex dependence on the network. Even under linearity, we still need to deal with either imposing strong assumptions on the connectivity structure of the network, or we will need to use more complex estimators that utilize knowledge of the network, bringing us back to the same challenges present in the fully general model.

When the conditions for unbiasedness are not satisfied, the bias of the above simple estimator will scale with the average network effect across the edges between the treated and control groups, given by the expression 
\[\E[\widehat{\TTE}] - \text{TTE} = \tfrac{1}{n}\textstyle\sum_{(k,i) \in \cE} \left(\tfrac{\Cov[z_i, z_k]}{\Var[z_i]} - 1\right) \gamma_{ki}.\]
If the randomized design produces high correlation in treatments across pairs of connected individuals in the network, then $\E[\widehat{\TTE}]$ is close in expectation to the total treatment effect. If the design enforces independence of treatments across edges in the network, then $\E[\widehat{\TTE}]$ only captures the direct treatment effects and not the network effects.

The restrictive unbiasedness conditions result from the fact that it is difficult to set the coefficient for the baseline parameters to 0 while maintaining that the coefficients on the network effects are $1/n$, as the expressions for both are very similar. Essentially, it is difficult for the model to distinguish between the effects arising from individual baselines as opposed to the ambient network effects from treated neighbors. Given this insight, in the next section we consider the scenario where we have access to estimates of the average individual baselines.

\section{Simple unbiased estimator given baseline estimates without any knowledge of the network}

In practice there are many applications in which we do have access to additional information from historical data or pilot studies that could be used to construct estimates of the average baseline $\frac{1}{n} \sum_{i \in [n]} Y_i({\bf 0}) = \frac{1}{n} \sum_{i \in [n]} \alpha_i$. For example, a social media platform such as LinkedIn is constantly monitoring the engagement level of its users, such that it always has access to the current status quo baselines at an individual level before deploying randomized trials for a newly proposed feature. 
Even when historical data may not be available, it is typically easy to conduct small scale surveys to estimate the baseline outcome levels before beginning the randomized experiment. The data must be collected before the experiment begins such that no one has yet received the treatment. Under our heterogenous additive network effects assumption, the measurements collected before the experiment will accurately reflect the baseline with no network effects due to treatment; this additional data can then be used to significantly simplify the estimation of causal effects. 

Let's first assume that we have access to the full individual baselines; it follows naturally to then subtract the baseline $\alpha_i$ from the measurement $Y_i(\bz)$ to remove all contributions of the baseline effects from the linear estimator, resulting in 
\[ \est_{-\alpha}(\bw,\bv) = \textstyle\sum_{i \in [n]} \left(w_i z_i + v_i (1 - z_i)\right) (Y_i(\bz) - \alpha_i).\]
To characterize conditions for unbiased linear estimators, we use the same approach of equating the coefficients of the direct effects and the network effects between the expected value of this estimator and the total treatment effect. Subtracting out the baselines removes the set of constraints for unbiasedness associated to the baseline parameters, leaving us with $n + |\cE|$ constraints. While this is still significantly more than the number of measurements, it turns out that there are still many reasonable randomized designs under which we are able to satisfy these constraints. Theorem \ref{thm:TTE_2} presents sufficient conditions for unbiased linear estimators for total treatment effect given baseline estimates.
\begin{theorem} \label{thm:TTE_2}
	Under heterogeneous additive network effects, for any randomized design such that $\frac{\Prob(z_k = 0 ~|~ z_i = 1)}{\Prob(z_k = 1 ~|~ z_i = 0) } = \rho_i$ for all $(k,i) \in \cE$ for some values of $\{\rho_i\}_{i \in [n]}$, the following estimator 
	\[ \widehat{\TTE}_{-\alpha} = \frac{1}{n}\sum_{i \in [n]} \left(\frac{z_i}{\E[z_i]} - \frac{(1 - z_i)\rho_i}{\E[1-z_i]}\right) (Y_i(z) - \alpha_i), \]
	produces an unbiased estimate for the total treatment effect.
\end{theorem}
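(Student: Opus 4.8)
The plan is to reuse the coefficient-matching template from the proof of Theorem~\ref{thm:TTE}, now applied to the baseline-adjusted outcomes. First I would substitute the model into the estimator. Since $Y_i(\bz) - \alpha_i = \beta_i z_i + \sum_{k \in [n]} \gamma_{ki} z_k$, the estimator is an individually weighted linear functional of these adjusted outcomes with treated weight $w_i = 1/(n\,\E[z_i])$ and some control weight $v_i$ (up to sign, $\rho_i/(n\,\E[1-z_i])$). Expanding and taking expectations splits $\E[\widehat{\TTE}_{-\alpha}]$ into a direct-effect part and a network-effect part exactly as in Theorem~\ref{thm:TTE}; the Boolean identities $z_i^2 = z_i$ and $z_i(1-z_i)=0$ again collapse the cross terms. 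Crucially, because the baseline has already been subtracted, the $n$ baseline constraints drop out entirely, leaving only the $\beta$- and $\gamma$-matching constraints.

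The direct-effect constraint is met for free: the coefficient of $\beta_i$ is $w_i\,\E[z_i] = 1/n$ by the definition of $w_i$, with no condition on the design. So the entire argument reduces to the $|\cE|$ network-effect constraints, each of the form
\[ w_i\,\E[z_i z_k] + v_i\,\E[(1-z_i) z_k] = \tfrac{1}{n}, \qquad (k,i) \in \cE. \]
Writing $\E[z_i z_k] = \E[z_i]\,\Prob(z_k = 1 \mid z_i = 1)$ and $\E[(1-z_i) z_k] = \E[1-z_i]\,\Prob(z_k = 1 \mid z_i = 0)$ and substituting $w_i\,\E[z_i] = 1/n$, this constraint is equivalent to requiring the control weight to supply exactly the missing probability mass, namely $v_i\,\E[1-z_i]\,\Prob(z_k = 1 \mid z_i = 0) = \tfrac1n\,\Prob(z_k = 0 \mid z_i = 1)$, after which $\Prob(z_k=1\mid z_i=1) + \Prob(z_k=0\mid z_i=1) = 1$ closes the identity.

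The step I expect to be the main obstacle --- and the reason the design condition appears --- is that a single control weight $v_i$ must satisfy this equation simultaneously for every in-neighbor $k$ of $i$. Solving the displayed equation for $v_i$ gives $v_i\,\E[1-z_i] = \tfrac1n\,\Prob(z_k=0\mid z_i=1)/\Prob(z_k=1\mid z_i=0)$, whose right-hand side still depends on $k$. A fixed $v_i$ can meet all of these at once precisely when this ratio is constant over the in-neighbors of $i$, which is exactly the hypothesis $\Prob(z_k = 0 \mid z_i = 1)/\Prob(z_k = 1 \mid z_i = 0) = \rho_i$. Under this condition the common value determines $v_i$ in terms of $\rho_i$, matching the estimator in the statement, and --- importantly --- $\rho_i$ is a quantity the experimenter fixes through the design, so no knowledge of $\cE$ is required to compute it.

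With both families of constraints satisfied, summing the matched coefficients yields $\E[\widehat{\TTE}_{-\alpha}] = \tfrac1n\big(\sum_i \beta_i + \sum_{(k,i)\in\cE}\gamma_{ki}\big) = \TTE$. The only delicate bookkeeping is tracking the sign and orientation of the ratio defining $\rho_i$, since the final cancellation hinges on $\Prob(z_k=1\mid z_i=1)$ and $\Prob(z_k=0\mid z_i=1)$ being genuine complements.
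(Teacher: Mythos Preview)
Your proposal is correct and follows essentially the same route as the paper: drop the baseline constraints, use the $\beta$-constraint to pin down $w_i = 1/(n\,\E[z_i])$, then rewrite the $\gamma$-constraint via conditional probabilities to solve for $v_i$, with the design hypothesis being exactly what makes the resulting $v_i$ independent of the particular in-neighbor $k$. Your flagged concern about sign bookkeeping is apt, since the derivation yields $v_i\,\E[1-z_i] = \rho_i/n$ with a positive sign, which is what the paper's own proof and the subsequent corollary both use.
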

The condition on the randomization is equivalent to imposing that 
\[\frac{\E[z_i(1-z_k)]}{\E[(1-z_i)z_k]} = \frac{\E[z_i(1-z_j)]}{\E[(1-z_i)z_j]}\]
for all triplets $(i,j,k)$ such that $(k,i) \in \cE$ and $(j,i) \in \cE$. Essentially this boils down to symmetry conditions on the second moments of the treatment vector across edges in the network. Such a symmetry condition would be satisfied by ensuring that for all $i$, the neighbors that influence $i$ are treated equally in the distribution of the assigned treatments. The ratio above can be expanded to 
\[\frac{\E[z_i(1-z_k)]}{\E[(1-z_i)z_k]} = \frac{\E[z_i] - \E[z_i z_k]}{\E[z_k] - \E[z_i z_k]},\]
which is equal to 1 if $\E[z_i] = \E[z_k]$. As a result, a sufficient condition to satisfy the required symmetry is to impose that the marginals are equal across edges, i.e. $\E[z_i] = \E[z_k]$ for all $(k,i) \in \cE$. This is an easy condition to satisfy, and leads to a simplified result as stated below.
\begin{corollary} \label{thm:TTE_corr}
	For any randomized design such that $\E[z_i] = \E[z_k]$ for all $(k,i) \in \cE$, the following simple estimator
	\[\widehat{\TTE}_{-\alpha}  = \frac{1}{n}\sum_{i \in [n]} \frac{Y_i(\bz) -  \alpha_i}{\E[z_i]}\]
	produces an unbiased estimate for the total treatment effect under heterogeneous additive network effects.
	When  $\E[z_i] =p$ for all $i \in [n]$, the estimator further simplifies to
	\[\widehat{\TTE}_{-\alpha}  = \frac{1}{p} \Big(\frac{1}{n}\sum_{i \in [n]} Y_i(\bz) - \frac{1}{n} \sum_{i \in [n]} \alpha_i\Big).\]
\end{corollary}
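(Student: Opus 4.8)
The plan is to prove Corollary~\ref{thm:TTE_corr} as the special case of Theorem~\ref{thm:TTE_2} obtained when the marginal treatment probabilities agree across every interfering edge, while backing up the conclusion with a direct expectation computation that pins down the simplified estimator unambiguously. There are two things to do: confirm that $\E[z_i]=\E[z_k]$ on all $(k,i)\in\cE$ meets the hypothesis of Theorem~\ref{thm:TTE_2}, and then check that the general weights collapse to $1/\E[z_i]$.

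First I would verify the hypothesis. Expanding the defining ratio as
\[\frac{\Prob(z_k=0\mid z_i=1)}{\Prob(z_k=1\mid z_i=0)}=\frac{\E[z_i(1-z_k)]}{\E[(1-z_i)z_k]}\cdot\frac{\E[1-z_i]}{\E[z_i]}=\frac{\E[z_i]-\E[z_iz_k]}{\E[z_k]-\E[z_iz_k]}\cdot\frac{\E[1-z_i]}{\E[z_i]},\]
the first factor equals $1$ whenever $\E[z_i]=\E[z_k]$, so the ratio is constant over all neighbors $k$ of $i$ with common value $\rho_i=\E[1-z_i]/\E[z_i]$. This is exactly the symmetry required by Theorem~\ref{thm:TTE_2}, and substituting this $\rho_i$ into its weighting collapses the treated and control branches to the single coefficient $1/\E[z_i]$, producing the stated estimator.

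The cleaner, self-contained route is to compute the expectation directly. Substituting the additive model $Y_i(\bz)-\alpha_i=\beta_iz_i+\sum_k\gamma_{ki}z_k$ and using linearity gives
\[\E\big[\widehat{\TTE}_{-\alpha}\big]=\frac1n\sum_{i\in[n]}\frac{\beta_i\E[z_i]+\sum_k\gamma_{ki}\E[z_k]}{\E[z_i]}=\frac1n\sum_{i\in[n]}\beta_i+\frac1n\sum_{i\in[n]}\sum_k\gamma_{ki}\frac{\E[z_k]}{\E[z_i]}.\]
The direct-effect piece is already $\tfrac1n\sum_i\beta_i$. The crux is the interference piece: because $\gamma_{ki}$ is supported on $\cE$, and the hypothesis forces $\E[z_k]=\E[z_i]$ precisely on those edges, every surviving ratio $\E[z_k]/\E[z_i]$ equals $1$, collapsing the double sum to $\tfrac1n\sum_{(k,i)\in\cE}\gamma_{ki}$; adding the two pieces returns $\TTE=\tfrac1n(\sum_i\beta_i+\sum_{ki}\gamma_{ki})$.

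There is no serious obstacle here; the one point that needs care is recognizing that the marginal-matching condition is required only on the support of the network effects, not globally, which is precisely why the factor $\E[z_k]/\E[z_i]$ may be dropped exactly where $\gamma_{ki}$ multiplies it, and is what makes the condition so mild. The final simplification is immediate: when $\E[z_i]=p$ for all $i$, the constant $1/\E[z_i]=1/p$ factors out of the sum, and splitting $\sum_i(Y_i(\bz)-\alpha_i)$ yields $\tfrac1p\big(\tfrac1n\sum_iY_i(\bz)-\tfrac1n\sum_i\alpha_i\big)$.
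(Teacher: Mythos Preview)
Your first route---verifying that $\E[z_i]=\E[z_k]$ on edges forces the ratio in Theorem~\ref{thm:TTE_2} to the constant $\rho_i=\E[1-z_i]/\E[z_i]$, and then substituting into the general weights---is exactly the paper's proof, down to the same algebraic expansion of the conditional-probability ratio. Your second route, the direct expectation computation using $Y_i(\bz)-\alpha_i=\beta_iz_i+\sum_k\gamma_{ki}z_k$, is not in the paper but is a clean self-contained alternative: it bypasses Theorem~\ref{thm:TTE_2} entirely and makes transparent that unbiasedness hinges only on the ratios $\E[z_k]/\E[z_i]$ equaling $1$ on the support of $\gamma$, which is a nice clarification of why the edge-level condition suffices.
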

It may seem that we need knowledge of the network in order to choose a distribution satisfying the symmetry conditions, as they are defined with respect to constraints over the edges, however without knowledge of the network, one could use a distribution with uniform marginal treatment probabilities as it satisfies the required conditions for even the complete graph, which is most restrictive.
When the marginal treatment probability is equal for all individuals, i.e. $\E[z_i] =p$ for all $i \in [n]$, the resulting estimator in fact only needs knowledge of the average population baselines rather than individual baseline parameters. 
While there are a few settings for which individual baseline parameters are observed from historical data, such as experimentation on social media platforms, data of such granularity is not realistic in general. On the other hand, having access an accurate estimate of the average population wide baselines is realistic for a broad variety of applications across public health and social sciences, as the population wide statistic could be estimated from small scale pilot studies. 

Many simple classical randomized designs satisfy the property that all individuals have an equal marginal probability of treatment, in particular, this includes completely randomized design, which assigns a $p$ fraction of individuals uniformly at random from the population to the treatment group. An important property is that {\em neither our estimator nor the appropriate randomized designs need to have knowledge of the underlying network.} In fact, all previously proposed solutions required knowledge of the network either for the randomized design or to compute the estimator. In applications where the network is not fully observed, our proposed estimators will still output in an unbiased estimate for the total treatment effect with simple randomizations that can be implemented without knowledge of the underlying network. This provides positive guarantees for settings in which the randomization may be limited due to regulatory policies or lack of precise network information. 

A critical assumption that our proposed estimator hinges on is the ability to estimate the baseline parameters. While there may not be sufficient information to estimate individual baselines, the estimators presented in Corollary \ref{thm:TTE_corr} only require estimates of the weighted average baseline outcomes, which can be approximated by sampling a small fraction of the population. However, implicit in our assumption is that the average baseline outcomes pre-experiment and post-experiment are the same, which excludes settings in which there are time-related dynamics that significantly change the baselines irregardless of the treatment. As an example, suppose that a pharmaceutical company used reported data from state and national level public health departments to estimate the average baseline rates of contracting COVID-19 before beginning its clinical trials. The assumption that the baseline outcomes remain fairly constant may be violated if the timescale of the trial period is such that the predominate variant of the virus changes in the interim or the baseline natural immunity level of the population changes significantly. 

\begin{figure*}
	\centering
	\includegraphics[width=12 cm]{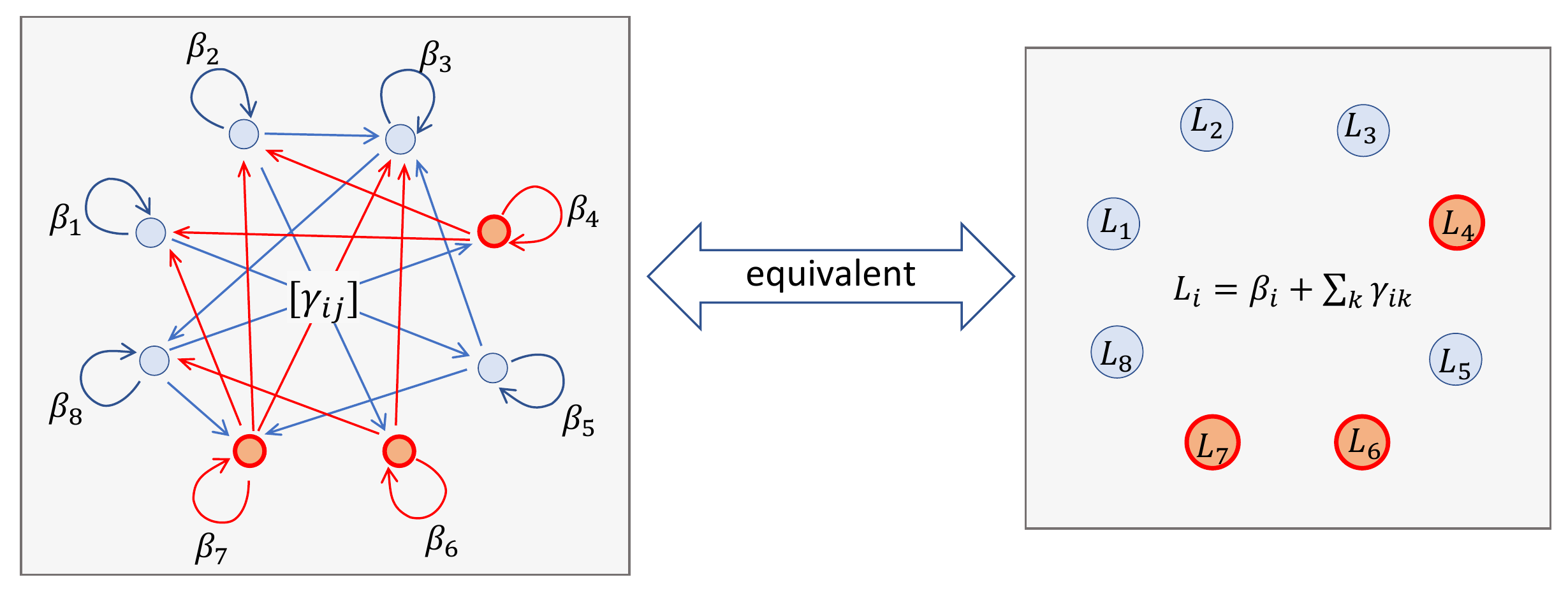}
	\caption{By incorporating baseline estimates, the difficult task of network causal inference reduces to simply estimating a population mean quantity. {\em (Left)} The total treatment effect (TTE) under heterogeneous additive network effects is equivalent to the weighted sum of all the edges divided by the total number of individuals. The vertices corresponding to treated individuals are colored red, and all the outgoing edges from treated individuals are colored red. Given prior knowledge of the population average baseline outcomes, the proposed estimator $\widehat{\TTE}_{-\alpha}$ is equal to the weighted sum of all red edges divided by the number of treated individuals. {\em (Right)} For each vertex $i$, we can sum the weights of outgoing edges into an influence term $L_i$. As a result the task is equivalent to a parallel universe in which each vertex $i$ is associated to an influence $L_i$, and there is no further network interaction. As a result of the proposed experiment, we observe the sum of the influences of all treated individuals. Any randomized design with enough randomness and regularity will be able to guarantee that the average influence of the treated individuals is equal to the population mean influence.} 
	\label{fig:key_insight}
\end{figure*}

\section{Reduction from Network Causal Inference to Estimation of Population Mean}

For the remainder of the paper we will focus on the following estimator introduced in Corollary \ref{thm:TTE_corr},
\[\widehat{\TTE}_{-\alpha}  = \frac{1}{n}{\sum_{i \in [n]} }\frac{Y_i(\bz) -  \alpha_i}{\E[z_i]}.\]
It is easy to verify that the bias of the estimator is given by
\[\E[\widehat{\TTE}_{-\alpha}] - \text{TTE} = \tfrac{1}{n}\textstyle\sum_{(i,k) \in \cE} \left(\tfrac{\E[z_i]}{\E[z_k]} - 1\right) \gamma_{ik},\]
which is zero when the marginal treatment probabilities are equal across individuals. This estimator is particularly simple because the weights are chosen such that $w_i = v_i$, i.e. each outcome is incorporated to the estimator with the same weight regardless of whether it is treated or not. 

We define the ``influence'' of individual $i$ on the estimate as 
\[L_i = \beta_i + \sum_{k \in [n]} \frac{\E[z_i] \gamma_{ik}}{\E[z_k]}.\]
We refer to this as ``influence'' because $L_i$ captures the contribution that individual $i$ has towards the estimate $\widehat{\TTE}_{-\alpha}$ when $i$ is treated, including the interference effect it has on other individuals as well as the the direct effect it has on itself. $L_i$ does not depend on the realization of the treatment vector.

Under heterogeneous additive network effects,
\[\widehat{\TTE}_{-\alpha} = \frac{1}{n}\sum_{i \in [n]} \Big(\frac{\beta_i}{\E[z_i]} + \sum_{k \in [n]} \frac{\gamma_{ik}}{\E[z_k]}\Big) z_i =: \frac{1}{n}\sum_{i \in [n]} \frac{L_i z_i}{\E[z_i]}.\]
Written in this form, it is clear that the $\widehat{\TTE}_{-\alpha}$ is simply an inverse propensity weighted estimator, although the terms $L_i$ are not actually observed. As a result, analytical expressions of the variance of our estimate will follow from direct calculations over the inverse propensity weighted estimates. 

Furthermore, under the sufficient conditions for unbiasedness when $\E[z_i] = \E[z_k]$ for all $(i,k) \in \cE$, the total treatment effect is equal to the population mean of the influence terms,
\[TTE = \frac{1}{n} \sum_{i \in [n]} L_i.\]
As a result, under the mild assumption of having access to baseline estimates, we have reduced the complex task of network causal inference to a simple task of estimating a population mean of the influence via the sample average of the influence of treated individuals. This removes all complexity of network interference as the influence terms do not interact.
Although our randomized designs and estimators do not require any knowledge of the network, the distribution of the influence will depend on the network, and will subsequently affect the variance of our estimator. 

\section{Variance of Proposed Estimator} \label{sec:variance}

As there are still many different randomized treatment designs one could use, we compare the variance of our proposed estimator under some commonly used randomized designs. Let $p$ denote the treatment budget, such that the size of the treatment group can be at most $p$ fraction of the population.

As our estimator corresponds to approximating the total treatment effect with the average of the influence of the treated individuals, this estimator directly inherits properties from the analysis of a sample average estimator for the population mean of the influence. The variance of the estimator is
\begin{align*}
	\Var[\widehat{\TTE}_{-\alpha}] = \sum_{i,j}\frac{L_i L_j \Cov(z_i, z_j)}{n^2 \E[z_i] \E[z_j]}.
\end{align*}
The randomized design affects the variance through the covariance matrix of the treatment vector.

\subsubsection{Completely Randomized Design (CRD)}
The completely randomized design generates the treatment assignment vector $\bz$ by selecting a subset of $pn$ units to treat uniformly at random out of the size $n$ population. This randomized design is commonly used in the classical setting without network interference. Due to the uniform sampling, $\E[z_i]=p$ for all $i \in [n]$, and
\[\Var[\widehat{\TTE}_{-\alpha}] = \frac{1-p}{p(n-1)} \Big(\frac{1}{n} \sum_{i\in[n]} L_i^2 - \Big(\frac{1}{n}\sum_{i\in[n]} L_i\Big)^2\Big).\]
The inner expression is equal to the population variance of the influence terms $\{L_i\}_{i \in [n]}$, which is bounded by $B^2 d^2_{\max}$, where $d_{\max}$ denotes the maximum out-degree of the network, and $B$ is a bound on the direct effect and network effect parameters. A simple bound on the variance thus scales as $B^2 d^2_{\max}/pn$. As a result, the variance will converge to zero for large $n$ as long as the number of treated individuals $pn$ divided by a constant $B^2 d^2_{\max}$ goes to infinity with large $n$. This is an optimally efficient rate when $B$ and $d_{\max}$ are constants, as this only requires the number of treated individuals to be growing larger than a constant as $n$ grows. In fact, even optimal estimators for causal effects under SUTVA, without network interference, results in a variance scaling as $1/pn$. This means that given the mild assumption of having access to baseline estimates, our approach under fully heterogeneous network effects attains a simple, unbiased, and optimally efficient estimate for the total treatment effect, under the simplest randomized design.

\subsubsection{Cluster Randomized Design (cluster RD)}
The cluster randomized design partitions the population into clusters, and all individuals in each cluster are either jointly placed in the treatment group or the control group. This is also referred to in the literature as block randomized design, where blocks refer to clusters. The treatment assignment vector is generated by selecting a subset of $pT$ clusters to treat uniformly at random amongst the $T$ clusters. In contrast to completely randomized design, the treatment of individuals within a cluster are perfectly correlated. This randomized design is commonly used in the network interference setting, where the clusters are additionally constructed to minimize edges across clusters, so that an individual and its local neighbors are jointly assigned to treatment or control as much as possible. In our setting, we do not require such conditions on the construction of the clusters, and thus our clusters may not correspond to tightly connected communities in the network.

The probability an individual is treated is equal to the probability that the cluster it belongs to is treated. Due to the uniform sampling across the clusters, it follows that $\E[z_i]=p$ for all $i \in [n]$. Let $T$ denote the number of clusters, assuming clusters of uniform size $n/T$ for simplicity. Let $\pi:[n] \to [T]$ denote the mapping that assigns individuals to clusters. Let $L'_{\tau}$ denote the average value of the influence terms within cluster $\tau$,
$L'_{\tau} = \frac{T}{n}\textstyle\sum_{i: \pi(i) = \tau} L_i$.
The variance of our estimator under this randomized design is given by
\[\Var[\widehat{\TTE}_{-\alpha} ] 
= \frac{1-p}{p(T-1)} \Big(\frac{1}{T} \sum_{\tau \in [T]} {L'_{\tau}}^2 - \Big(\frac{1}{T}\sum_{\tau \in [T]} L'_{\tau} \Big)^2\Big).\]
Observe that the expression is very similar to the variance under the completely randomized design, except we are randomizing over clusters rather than individuals. The inner expression is equal to the variance across clusters of $L'_{\tau}$, which is the average influence of individuals in cluster $\tau$, bounded by $B^2 (\max_{\tau \in [T]} \bar{d}_{\tau})^2$, where $\bar{d}_{\tau}$ denotes the average out-degree of individuals in cluster $\tau$. A bound on the variance thus scales as $B^2  (\max_{\tau \in [T]} \bar{d}_{\tau})^2 / p T$.

If $d_{\max}$ is constant, and if $T$ is asymptotically smaller than $n$, i.e. the size of each cluster is not constant with respect to the population size $n$, then the variance under cluster randomized design is larger than the variance under completely randomized design. When $d_{\max}$ may be large or even growing with $n$ then the variance might be improved by using cluster randomized design with an optimal choice of clusters. In particular, there would be a tradeoff between the choice of cluster size, and the gain from smoothing out the degree distribution, i.e. the influence, across clusters. In particular, if there is high variation in the influence amongst individuals, then the variance would be minimized by splitting large influence individuals across different clusters, and grouping them with low influence individuals, in order to try to even out the average influence of each cluster. This requires detailed knowledge of the network however, which is often not available.

\subsubsection{Saturation Randomized Design (SRD)}
The saturation randomized design also assumes that the population is partitioned into clusters, but instead each cluster is treated at a specified saturation level, specifying the percentage of individuals in that cluster that are treated. This is also referred to as a cluster stratified randomized design. For a cluster $\tau$, let $p_{\tau}$ denote the fraction of individuals treated in cluster $\tau$, satisfying $\sum_{\tau \in [T]} n_{\tau} p_{\tau} = np$. Let $T$ denote the number of clusters, and let $\pi:[n] \to [T]$ map individuals to clusters. Let $n_{\tau} = |{i: \pi(i) = \tau}|$ denote the size of cluster $\tau$.  Treatment across different clusters are assigned independently. For each cluster $\tau$, a set of $p_{\tau} n_{\tau}$ individuals within the cluster are selected uniformly at random to be treated. Let $V_{\tau}$ denote the variance of the influence terms within cluster $\tau$, 
\[V_{\tau} = \frac{1}{n_{\tau}} \sum_{i: \pi(i) = \tau} L_i^2 - \Big(\frac{1}{n_{\tau}}\sum_{i: \pi(i) = \tau} L_i\Big)^2.\] 
Under uniform saturation, i.e. $p_{\tau} = p$ for all $\tau$, the estimator is unbiased as $\E[z_i] = p$ for all $i$, and the variance is
\[\Var[\widehat{\TTE}_{-\alpha} ] = \frac{1-p}{p n} \sum_{\tau \in [T]} \frac{n_{\tau}^2}{n(n_{\tau}-1)} V_{\tau}.\]
This expression essentially scales as $V_{\text{avg}}/pn$, where $V_{\text{avg}}$ denotes the weighted average across cluster variances $V_{\tau}$. In particular, to minimize the variance, each cluster should be chosen to be as homogeneous as possible, so that there is little variation in the influence parameters within each cluster. If $V_{\text{avg}}$ is significantly smaller than the overall population variance over the influence terms $\{L_i\}_{i \in [n]}$, then the uniform saturation randomized design improves in efficiency upon CRD. 
An extreme special case of this randomized design would be the matched pair randomized design, where the pairs correspond to the clusters, and $p = 1/2$. The pairs are selected to be as similar as possible on known features. 

Constructing such clusters requires additional knowledge of covariates or network structure which is not always available; however, this analysis provides motivation that whenever we do have such auxiliary information at hand, we can only benefit by controlling for the variance which may be related to the auxiliary information. In particular, by grouping similar individuals together, we ensure that the distribution over the auxiliary information in the treated group is as similar as possible to the control group.

For a general choice of saturation levels, the estimator may be biased, and the bias will scale proportionally with the sum of the network effect of edges across clusters with different saturation levels. The variance of the estimator is given by
\[\Var[\widehat{\TTE}_{-\alpha} ] = \sum_{\tau \in [T]} \frac{(1-p_{\tau}) n_{\tau}^2}{p_{\tau} n^2(n_{\tau}-1)} V_{\tau}.\]
While varying the saturation levels would introduce bias into the estimator, it may be able to reduce the variance by allocating a larger treated fraction to clusters that are larger or that have larger within-cluster variance $V_{\tau}$. the reduction in variance would need to be carefully balanced with the introduced bias however, which may be difficult to do since it would require auxiliary information about the cluster variances. 

\section{Conclusion}

Estimating the total treatment effect under network interference is an important yet challenging problem. Previous solutions under general models are often too computationally or statistically costly, or are limited to very simplistic network structures, inhibiting adoption in practice. As a result, many practical solutions consider strong assumptions on the network effects, which end up reducing the estimation task to a simple regression problem. In contrast, we consider the heterogeneous additive network effects assumption, which imposes additive network effects, but allows for full heterogeneity in the edge level network effect parameters. This model is significantly more flexible than the simple linear models used in practice. We analyze the properties of individually weighted linear estimators under our model, and our results directly translate into the following insights that are simple to apply in practice. Given baseline estimates, we show that {\em network causal inference is as easy as estimating a population mean!} Most notably, our solution {\em does not require knowledge of the underlying network}, nor does it critically require strong structural conditions on the network. 

\paragraph{Insight 1: Prior information from historical data or pilot studies is incredibly valuable; without such information, any unbiased estimate must use knowledge of the network.}
We showed that without prior information, there does not exist any unbiased estimate for the total treatment effect which does not use network structure. In particular, we restricted to linear estimators with weights that depend only on whether an individual is treated or not, and not on the treatment of its neighbors. We showed that an unbiased linear estimator only exists if the network can be fully partitioned into disconnected components, such that the randomized design must jointly treat or not treat all individuals in each connected component. 

\paragraph{Insight 2: Use historical data or pilot studies to estimate the population baseline; use the following simple unbiased estimator to approximate the total treatment effect.}
We proposed the following unbiased estimator for any randomized design for which the marginal treatment probability of each individual is $p$,
\[\widehat{\TTE}_{-\alpha}  = \frac{1}{p} \left(\frac{1}{n}\sum_{i \in [n]} Y_i(\bz) - \frac{1}{n} \sum_{i \in [n]} \alpha_i\right).\]

\paragraph{Insight 3: The statistical properties of the estimator depend on the population distribution of individual influence on the total treatment effect.} Under our additive network effects model, our proposed estimator takes the form of 
\[\widehat{\TTE}_{-\alpha}  = \frac{1}{n}\sum_{i \in [n]} \frac{L_i z_i}{\E[z_i]} ~\text{ for }~ L_i = \beta_i + \sum_{k \in [n]} \frac{\E[z_i] \gamma_{ik}}{\E[z_k]},\]
where $L_i$ quantifies individual $i$'s influence on the total treatment effect. This characterization shows that for a sufficiently large population, under simple randomized designs, the distribution of the estimator will also be approximately Gaussian. As a result, variance estimates could be used to design hypothesis tests and compute p-values.

\paragraph{Insight 4: Using completely randomized design (CRD) results in optimally efficient estimation of the total treatment effect when the effect sizes and out-degrees are bounded.} Under CRD, the variance of our estimator is roughly equal to $1/pn$ times the population variance of the influence terms $\{L_i\}_{i \in [n]}$, which is bounded by $B^2 d_{\max}^2$ when the causal effect parameters are bounded by $B$ and the out-degree of each individual is bounded by $d_{\max}$. As a result, the estimator is consistent as long as the number of treated individuals $pn$ is larger than a constant with respect to the population size $n$.

\paragraph{Insight 5: Utilize any auxiliary information about network structure or covariates of individuals in the population to control for variance that may arise from heterogeneity amongst individuals.} The variance of the estimator is minimized by constructing a randomized design under which the distribution of the influence terms within treated and control are as similar as possible. The influence of individual $i$, denoted by $L_i$, is a function of the causal effect parameters $\beta_i, \gamma_{ik}$, but also depends on the network via its local neighborhood structure. If the causal effect parameters or the network structure were related to observed covariantes, we could reduce the variance of the estimator by using a uniform saturation randomized design, where we group together individuals that are similar with respect to observed covariates and local neighborhood structure in the network.

\medskip
There are many interesting possible extensions for this work, including generalizing the model to add nonlinear terms or relaxing the requirement on having access to estimates of the population baseline, which are being studied by a subset of the authors in forthcoming papers \cite{cortez2022graph, cortez2022exploiting}.

\section{Acknowledgements}

This research was supported in part by Microsoft Research New England. We also gratefully acknowledge funding from the NSF under grants CCF-1948256 and CNS-1955997. Christina Lee Yu is also supported by an Intel Rising Stars Award.
	
\bibliographystyle{plain}
\bibliography{bibliography}

\appendix

\section{Proof of Results on Total Treatment Effect}





\begin{proof}[Proof of Theorem \ref{thm:TTE_2}]
	After subtracting out the baseline parameters, the constraints for unbiasedness reduce to 
	\begin{itemize}
		\item $\beta$: for all $i \in [n]$, $w_i \E[z_i] = \frac{1}{n}$,
		\item $\gamma$: for all $(k,i) \in \cE$, $w_i \E[z_i z_k] + v_i \E[(1-z_i) z_k] = \frac{1}{n}$.
	\end{itemize}
	Satisfying the constraints arising from $\beta$ and $\gamma$ results in
	\[w_i = \frac{1}{n \E[z_i]} \text{ and } v_i = \frac{\E[z_i(1-z_k)]}{n \E[z_i]\E[(1-z_i)z_k]} \text{ for all } (k,i) \in \cE.\]
	In order to ensure that such a valid $v_i$ exists, we would need that $\frac{\Prob(z_k = 0 ~|~ z_i = 1)}{\Prob(z_k = 1 ~|~ z_i = 0) } = \rho_i$ for all $(k,i) \in \cE$.
	Under this condition,
	\[\frac{\E[z_i(1-z_k)]}{\E[z_i]\E[(1-z_i)z_k]} = 
	\frac{\E[z_i] \Prob(z_k = 0 ~|~ z_i = 1)}{\E[z_i]\E[(1-z_i)] \Prob(z_k = 1 ~|~ z_i = 0)} = \frac{\rho_i}{\E[(1-z_i)]}.\]
\end{proof}


\begin{proof}[Proof of Corollary \ref{thm:TTE_corr}]
	First we can verify that if $\E[z_i] = \E[z_k]$, then 
	\[\frac{\Prob(z_k = 0 ~|~ z_i = 1)}{\Prob(z_k = 1 ~|~ z_i = 0) } =\frac{\E[z_i(1-z_k)]\E[1-z_i]}{\E[z_i]\E[z_k(1-z_i)]} =\frac{(\E[z_i]-\E[z_i z_k])\E[1-z_i]}{\E[z_i](\E[z_k]-\E[z_k z_i])}
	=\frac{\E[1-z_i]}{\E[z_i]} =: \rho_i.\]
	The estimator then results from plugging in the expression for $\rho_i$ into the estimator defined in Theorem \ref{thm:TTE_2}. When $\E[z_i] = p$, rearranging the expression then results in the simplified form  \[\widehat{\TTE}_{-\alpha}  = \frac{1}{p} \left(\frac{1}{n}\sum_{i \in [n]} Y_i(\bz) - \frac{1}{n} \sum_{i \in [n]} \alpha_i\right),\]
	highlighting that the only knowledge of the baseline parameters needed is the population average baseline.
\end{proof}

\section{Estimating the Average Treatment Effect}

The average treatment effect (ATE), also referred to as the direct treatment effect, measures the average difference in outcomes for individuals that is caused only by their own treatments, not including any network effects. It is formally defined as
\[\ATE := \frac{1}{n}\sum_{i \in [n]} (Y_i({\bf e}_i) - Y_i({\bf 0}))
= \frac{1}{n}\sum_{i \in [n]}\beta_i,\]
where ${\bf e}_i$ is the standard basis vector with 1 at component $i$ and zero elsewhere. 

\begin{theorem} \label{thm:ATE}
	under heterogeneous additive network effects, any unbiased individually weighted linear estimator for average treatment effect must have the form
	\[ \widehat{\ATE} = \frac{1}{n}\sum_{i \in [n]} \left(\frac{z_i}{\E[z_i]} - \frac{1 - z_i}{\E[1-z_i]}\right) Y_i(\bz), \]
	and the randomized design must satisfy $z_k \indep z_i$ for all $(k,i) \in \cE$.
\end{theorem}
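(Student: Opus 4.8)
The plan is to mirror the proof of Theorem~\ref{thm:TTE}, since the only structural change is that the target estimand drops the network-effect terms: the $\ATE = \frac{1}{n}\sum_i \beta_i$ places coefficient $0$ on every $\gamma_{ki}$ rather than $1/n$. First I would expand $\est(\bw,\bv)$ under the heterogeneous additive network effects model exactly as in the TTE proof, writing it as a linear form in the parameters $\{\alpha_i\}$, $\{\beta_i\}$, $\{\gamma_{ki}\}$ with coefficients built from $\bw$, $\bv$ and the treatment moments. Imposing $\E[\est(\bw,\bv)] = \ATE$ for every configuration of parameters forces coefficient matching, yielding three families of constraints: (i) $w_i\E[z_i] + v_i\E[1-z_i] = 0$ for all $i$ (the $\alpha$ coefficient must vanish); (ii) $w_i\E[z_i] = \frac{1}{n}$ for all $i$ (the $\beta$ coefficient); and (iii) $w_i\E[z_iz_k] + v_i\E[(1-z_i)z_k] = 0$ for all $(k,i)\in\cE$ (the $\gamma$ coefficient must now vanish, in contrast to the value $1/n$ appearing in Theorem~\ref{thm:TTE}).

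Next I would solve (i) and (ii) for the weights exactly as before. Constraint (ii) forces $w_i = 1/(n\E[z_i])$, and substituting into (i) gives $v_i = -1/(n\E[1-z_i])$. This pins down the claimed form of $\widehat{\ATE}$ independently of the design, so the form of the estimator follows purely from the $\alpha$ and $\beta$ constraints, with the design condition emerging only from the remaining family (iii).

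The remaining work is to interpret constraint (iii) with these weights substituted in. After clearing the common factor, (iii) becomes $\frac{\E[z_iz_k]}{\E[z_i]} = \frac{\E[(1-z_i)z_k]}{\E[1-z_i]}$, i.e. $\Prob(z_k=1\mid z_i=1) = \Prob(z_k=1\mid z_i=0)$ for every $(k,i)\in\cE$. I would then observe that since $z_i$ and $z_k$ are binary, equality of these two conditional probabilities is exactly the statement $z_k \indep z_i$. This is the one conceptual step that differs from Theorem~\ref{thm:TTE}: there the nonzero right-hand side $1/n$ forced perfect positive dependence across interfering edges, whereas here the zero right-hand side forces independence. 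I do not anticipate a genuine obstacle, as the argument is a direct adaptation of the earlier coefficient-matching calculation; the only point requiring care is verifying that the algebra on (iii) collapses cleanly to the conditional-probability equality and recognizing its equivalence with $z_k \indep z_i$ for binary treatments.
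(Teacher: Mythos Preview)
Your proposal is correct and essentially identical to the paper's own proof: both derive the three coefficient-matching constraints, solve the $\alpha$ and $\beta$ constraints to pin down $w_i=1/(n\E[z_i])$ and $v_i=-1/(n\E[1-z_i])$, and then substitute into the $\gamma$ constraint to obtain $\Prob(z_k=1\mid z_i=1)=\Prob(z_k=1\mid z_i=0)$, which for binary $z_i,z_k$ is exactly $z_k\indep z_i$.
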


\begin{proof}
	Under the heterogeneous additive network effects model, an individually weighted linear estimator takes the value
	\[\est(\bw,\bv) = \textstyle\sum_{i \in [n]} (w_i z_i + v_i (1-z_i)) \alpha_i + \textstyle\sum_{i \in [n]} w_i z_i \beta_i \]
	\[\qquad+ \textstyle\sum_{(k,i) \in \cE} (w_i z_i + v_i (1-z_i)) z_k \gamma_{ki}.\]
	This is unbiased for the average treatment effect only if $\E[\est(\bw,\bv)] = \frac{1}{n}\sum_{i \in [n]} \beta_i$ is satisfied for any configuration of $\{\alpha_i\}_{i \in [n]}, \{\beta_i\}_{i \in [n]}$, and $\{\gamma_{ki}\}_{(k,i) \in \cE}$. This requirement results in the following $2n + |\cE|$ constraints, which arise from matching coefficients for each of the parameters,
	\begin{itemize}
		\item $\alpha$: for all $i \in [n]$, $w_i \E[z_i] + v_i \E[1 - z_i]= 0$,
		\item $\beta$: for all $i \in [n]$, $w_i \E[z_i] = \frac{1}{n}$,
		\item $\gamma$: for all $(k,i) \in \cE$, $w_i \E[z_i z_k] + v_i \E[(1-z_i) z_k] = 0$.
	\end{itemize}
	Solving for the weights given the first two constraints results in 
	\[w_i = \frac{1}{n\E[z_i]} \text{ and } v_i = - \frac{1}{n\E[1-z_i]}.\]
	By plugging in these values of $w_i$ and $v_i$ into the third set of constraints arising from $\gamma$, it follows that we must satisfy 
	\[\frac{\E[z_i z_k]}{n\E[z_i]} - \frac{\E[(1-z_i) z_k]}{n\E[1-z_i]} =
	\frac{\Prob(z_k =1 | z_i = 1)}{n} - \frac{\Prob(z_k =1 | z_i = 0)}{n} = 0,\]
	which implies that $\Prob(z_k =1 | z_i = 1) = \Prob(z_k =1 | z_i = 0)$ such that $z_k \indep z_i$ for all $(k,i) \in \cE$.
\end{proof}

This independence constraint on the randomized design is quite restrictive, yet is easily satisfied by simple Bernoulli randomization. Although many other randomizations may not satisfy independence, if they are ``almost'' independent, for example in completely randomized design with a large population, then the bias will still be small.

\begin{corollary}\label{cor:ATE}
	Under heterogeneous additive network effects, the Horvitz-Thompson estimator with Bernoulli randomization is an unbiased estimator for the average treatment effect, even when the interference effects are fully dense.
\end{corollary}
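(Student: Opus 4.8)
The plan is to invoke Theorem \ref{thm:ATE} directly, since it already characterizes exactly which individually weighted linear estimators are unbiased for the ATE and under which randomized designs, so the corollary should reduce to checking that the Horvitz--Thompson estimator under Bernoulli randomization meets both of those conditions. Recall that the Horvitz--Thompson estimator for the ATE is the individually weighted linear estimator with weights $w_i = 1/(n\E[z_i])$ and $v_i = -1/(n\E[1-z_i])$, so that it reads
\[ \widehat{\ATE} = \frac{1}{n}\sum_{i \in [n]} \left(\frac{z_i}{\E[z_i]} - \frac{1-z_i}{\E[1-z_i]}\right) Y_i(\bz). \]
First I would observe that this is precisely the functional form that Theorem \ref{thm:ATE} identifies as the unique form an unbiased estimator can take; hence the weights are already the correct ones, and the only remaining requirement for unbiasedness is the design condition $z_k \indep z_i$ for all $(k,i) \in \cE$.

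Second, I would verify that Bernoulli randomization satisfies this design condition. Under a Bernoulli design each coordinate $z_i$ is drawn independently (each equal to $1$ with probability $p$), so the treatment vector $\bz$ has mutually independent coordinates. Since every edge $(k,i) \in \cE$ connects distinct individuals $k \neq i$ (the diagonal term $Y_i(\be_i) - Y_i(0) = \beta_i$ is absorbed into the direct effect and is never part of $\cE$), mutual independence immediately gives $z_k \indep z_i$ for each such edge. Combining this with the first step, Theorem \ref{thm:ATE} yields that $\widehat{\ATE}$ is unbiased.

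For the \emph{even when the interference effects are fully dense} clause, I would emphasize that the independence condition is verified edge by edge, and Bernoulli randomization grants independence of every pair $(z_k, z_i)$ regardless of how many edges $\cE$ contains; the conclusion is therefore insensitive to the density of $\cE$, going through unchanged whether $|\cE|$ is small or as large as $n(n-1)$. The only subtlety worth flagging is the bookkeeping in the first step, namely confirming that the standard Horvitz--Thompson estimator coincides with the canonical form of Theorem \ref{thm:ATE}, in particular the sign $v_i = -1/(n\E[1-z_i])$ that distinguishes the ATE (difference) estimator from the TTE (sum) estimator. Once that identification is settled, the corollary is an essentially immediate specialization with no genuine analytic obstacle.
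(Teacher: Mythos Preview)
Your proposal is correct and mirrors the paper's one-line proof, which simply notes that Bernoulli randomization makes $z_k \indep z_i$ for all $i \neq k$, so the design condition of Theorem~\ref{thm:ATE} is met. One small inaccuracy in your aside: the canonical $\widehat{\TTE}$ and $\widehat{\ATE}$ forms in the paper are identical (both carry the minus sign on the control term), so that sign does not ``distinguish'' the two estimands---the distinction lies only in the design constraint Theorem~\ref{thm:ATE} imposes versus Theorem~\ref{thm:TTE}.
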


\begin{proof}
	This follows from the fact that by definition of Bernoulli randomization, $z_k \indep z_i$ for all $i \neq k$.
\end{proof}

Next we consider the scenario when we have estimates of the individual baselines so that we can subtract them from the individual outcomes when constructing our estimator. 
\begin{theorem} \label{thm:ATE_2}
	For any randomized design such that $\Prob(z_i =1 | z_k =1) = \rho_i$ for all $(k,i) \in \cE$ for some values of $\{\rho_i\}_{i \in [n]}$, the following simple estimator
	\[ \widehat{\ATE}_{-\alpha} = \frac{1}{n}\sum_{i \in [n]} \left(\frac{z_i}{\E[z_i]} - \frac{(1 - z_i)\rho_i}{\E[z_i](1-\rho_i)}\right) (Y_i(z) - \alpha_i), \]
	produces an unbiased estimate for the average treatment effect under heterogeneous additive network effects.
\end{theorem}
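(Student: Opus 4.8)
The plan is to follow the same coefficient-matching template used in the proofs of Theorem~\ref{thm:ATE} and Theorem~\ref{thm:TTE_2}, now with the individual baselines subtracted out. First I would write the generic individually weighted linear estimator applied to the de-meaned outcomes,
\[\est_{-\alpha}(\bw,\bv) = \sum_{i \in [n]} (w_i z_i + v_i(1-z_i))(Y_i(\bz) - \alpha_i),\]
and substitute $Y_i(\bz) - \alpha_i = \beta_i z_i + \sum_{k \in [n]} \gamma_{ki} z_k$. Using $z_i^2 = z_i$ and $(1-z_i)z_i = 0$, the baseline contribution vanishes and the estimator collapses to $\sum_i w_i z_i \beta_i + \sum_{(k,i) \in \cE}(w_i z_i + v_i(1-z_i))z_k \gamma_{ki}$. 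Taking expectations and demanding $\E[\est_{-\alpha}(\bw,\bv)] = \frac{1}{n}\sum_i \beta_i$ for every choice of $\{\beta_i\}_{i \in [n]}$ and $\{\gamma_{ki}\}_{(k,i)\in\cE}$ yields exactly two families of constraints: the $\beta$-constraint $w_i \E[z_i] = \frac{1}{n}$ for all $i$, and the $\gamma$-constraint $w_i \E[z_i z_k] + v_i \E[(1-z_i)z_k] = 0$ for all $(k,i) \in \cE$.

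Next I would solve these for the weights. The $\beta$-constraint forces $w_i = 1/(n\E[z_i])$ with no freedom remaining. Plugging this into the $\gamma$-constraint and solving for $v_i$ gives
\[v_i = - \frac{\E[z_i z_k]}{n\,\E[z_i]\,\E[(1-z_i)z_k]},\]
where the right-hand side appears to depend on $k$. The crux of the argument is that a single scalar $v_i$ must simultaneously satisfy this for every neighbor $k$ of $i$, so the ratio $\E[z_i z_k]/\E[(1-z_i)z_k]$ must be constant across all $k$ with $(k,i) \in \cE$.

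This is precisely where the hypothesis enters. Conditioning on the source node $z_k$, I would write $\E[z_i z_k] = \Prob(z_i = 1 \mid z_k = 1)\,\E[z_k] = \rho_i\,\E[z_k]$ and $\E[(1-z_i)z_k] = (1-\rho_i)\,\E[z_k]$, so that the offending ratio reduces to $\rho_i/(1-\rho_i)$, which is indeed independent of $k$. Substituting back yields $v_i = -\rho_i/(n\E[z_i](1-\rho_i))$, and inserting $w_i$ and $v_i$ into $\est_{-\alpha}(\bw,\bv)$ reproduces the claimed estimator. I expect no genuine obstacle here: the computation is routine once the correct direction of conditioning is identified. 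The only subtle point, in contrast to Theorem~\ref{thm:TTE_2}, is that the common factor in both $\gamma$-terms is $z_k$ rather than $z_i$, so it is the conditional $\Prob(z_i = 1 \mid z_k = 1)$ (and not a conditional of $z_k$ given $z_i$) that must be held fixed across neighbors to guarantee a well-defined weight.
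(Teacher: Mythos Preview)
Your proposal is correct and follows essentially the same approach as the paper's own proof: derive the $\beta$- and $\gamma$-constraints from coefficient matching on the de-meaned outcomes, solve the $\beta$-constraint for $w_i$, substitute into the $\gamma$-constraint to obtain $v_i$, and then invoke the hypothesis $\Prob(z_i=1\mid z_k=1)=\rho_i$ to show that the resulting ratio $\E[z_iz_k]/\E[(1-z_i)z_k]=\rho_i/(1-\rho_i)$ is independent of $k$. Your explicit remark on why the conditioning must go from $z_k$ to $z_i$ (rather than the other way, as in Theorem~\ref{thm:TTE_2}) is a helpful clarification that the paper leaves implicit.
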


The condition that $\Prob(z_i =1 | z_k =1) = \rho_i$ for all $(k,i) \in \cE$ imposes symmetry in the randomized assignment amongst the neighbors of an individual. Completely randomized design or bernoulli randomization would satisfy this constraint. Alternatively a cluster based randomization would satisfy this constraint if the neighbors of a unit had equal chance of being assigned to the same versus different cluster.

\begin{proof}
	After subtracting out the baseline parameters, the constraints for unbiasedness reduce to 
	\begin{itemize}
		\item $\beta$: for all $i \in [n]$, $w_i \E[z_i] = \frac{1}{n}$,
		\item $\gamma$: for all $(k,i) \in \cE$, $w_i \E[z_i z_k] + v_i \E[(1-z_i) z_k] = 0$.
	\end{itemize}
	Satisfying the constraints arising from $\beta$ and $\gamma$ results in
	\[w_i = \frac{1}{n \E[z_i]} \text{ and } v_i = - \frac{\E[z_i z_k]}{n \E[z_i]\E[(1-z_i)z_k]} \text{ for all } (k,i) \in \cE.\]
	In order to ensure that such a valid $v_i$ exists, we would need that there exists values $\rho_i$ such that $\Prob(z_i =1| z_k=1) = \rho_i$ for all $(k,i) \in \cE$. Under this condition,
	\[\frac{\E[z_i z_k]}{\E[z_i]\E[(1-z_i)z_k]} = \frac{\rho_i}{\E[z_i](1 - \rho_i)}.\]
\end{proof}

Under completely randomized design with a budget of treating $p$ fraction of the population, $\E[z_i] = p$ and $\rho_i = \frac{pn - 1}{n - 1}$ for all $i \in [n]$. As a result,
\[ \widehat{\ATE}_{-\alpha} = \sum_{i \in [n]} \left(\frac{z_i (n-1) - pn + 1}{(1 - p)pn^2}\right) Y_i(z) - \frac{(n-1)}{n(1-p)} \left(\frac{1}{pn}\sum_{i \in [n]} z_i \alpha_i \right) + \frac{pn - 1}{(1-p)pn} \left(\frac{1}{n}\sum_{i \in [n]} \alpha_i\right). \]
Furthermore, for large enough $n$ the empirical average baseline of the treated individuals would be similar to the population average baseline, such that this estimator could be computed using the population baseline estimates by
\[ \widehat{\ATE}_{-\alpha} \approx \sum_{i \in [n]} \left(\frac{z_i (n-1) - pn + 1}{(1 - p)pn^2}\right) Y_i(z) - \frac{1}{pn^2} \sum_{i \in [n]} \alpha_i.\]

\section{Estimation the Average Interference Effect}

The average interference effect (AIE), also referred to as the network interference effect, measures the average difference in outcomes of individuals that is caused only due to network effects but not their own direct treatment effects. It is formally defined as
\[\AIE := \frac{1}{n}\sum_{i \in [n]} (Y_i({\bf e}_{[n] \setminus \{i\}}) - Y_i({\bf 0}))
= \frac{1}{n}\sum_{(k,i) \in \cE} \gamma_{ki},\]
where ${\bf e}_{\mathcal{S}}$ denotes the $n$-dimensional vector where ${\bf e}_{\cS}(x) = 1$ for $x \in \cS$ and ${\bf e}_{\cS}(x) = 0$ for $x \notin \cS$.

\begin{theorem} \label{thm:AIE}
	Under heterogeneous additive network effects, there does not exist an unbiased individually weighted linear estimator for the average interference effect.
\end{theorem}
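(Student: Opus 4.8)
The plan is to follow the coefficient-matching template used in the proofs of Theorems \ref{thm:TTE} and \ref{thm:ATE}, but to show that here the resulting linear system is \emph{inconsistent} for every randomized design, so that no unbiased estimator can exist. First I would expand a generic individually weighted linear estimator under the heterogeneous additive network effects model as
\[\est(\bw,\bv) = \sum_{i \in [n]} (w_i z_i + v_i (1-z_i)) \alpha_i + \sum_{i \in [n]} w_i z_i \beta_i + \sum_{(k,i) \in \cE} (w_i z_i + v_i (1-z_i)) z_k \gamma_{ki},\]
and require $\E[\est(\bw,\bv)] = \frac{1}{n}\sum_{(k,i)\in\cE}\gamma_{ki}$ to hold for every configuration of the parameters. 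Matching coefficients of $\alpha_i$, $\beta_i$, and $\gamma_{ki}$ then yields the constraints $w_i\E[z_i] + v_i\E[1-z_i]=0$, $w_i\E[z_i]=0$, and $w_i\E[z_iz_k] + v_i\E[(1-z_i)z_k]=\frac{1}{n}$. The crucial difference from the AIE-free estimands is that the $\beta$ constraint now has right-hand side $0$ rather than $\frac1n$, since the AIE contains no direct-effect term.

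The key step is to observe that the $\alpha$ and $\beta$ constraints together force both bilinear terms in the $\gamma$ constraint to vanish identically. Subtracting the $\beta$ constraint from the $\alpha$ constraint gives $v_i\E[1-z_i]=0$, so for each $i$ we have simultaneously $w_i\E[z_i]=0$ and $v_i\E[1-z_i]=0$. I would then invoke the pointwise bounds $0 \le z_iz_k \le z_i$ and $0 \le (1-z_i)z_k \le (1-z_i)$, which upon taking expectations give $0 \le \E[z_iz_k]\le\E[z_i]$ and $0\le\E[(1-z_i)z_k]\le\E[1-z_i]$. A short case split then shows $w_i\E[z_iz_k]=0$ and $v_i\E[(1-z_i)z_k]=0$ regardless of the design: if a marginal is strictly positive the corresponding weight is forced to zero, and if the marginal is zero the corresponding second moment is squeezed to zero. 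Hence the left-hand side of every $\gamma$ constraint equals $0$, contradicting the required value $\frac1n$ whenever $\cE\neq\emptyset$, which establishes nonexistence.

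The main obstacle, and the only subtlety beyond the earlier proofs, is handling the degenerate marginals $\E[z_i]\in\{0,1\}$ cleanly: one cannot simply divide through to conclude $w_i=0$ or $v_i=0$ as in the proof of Theorem \ref{thm:ATE}, so it is precisely the squeezing argument via the moment bounds that makes the conclusion hold uniformly over \emph{all} designs rather than only those with interior marginals. I would also make explicit the standing assumption $\cE\neq\emptyset$, since if there are no interfering edges then the AIE equals $0$ and the identically-zero estimator is trivially unbiased, making the nonexistence claim vacuously false in that edge case.
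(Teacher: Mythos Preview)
Your proposal is correct and follows essentially the same coefficient-matching approach as the paper. The paper's proof is terser: after deriving the same three families of constraints, it simply asserts that the $\alpha$ and $\beta$ constraints force $w_i = v_i = 0$ (implicitly assuming non-degenerate marginals $\E[z_i] \in (0,1)$) and observes that zero weights cannot satisfy the $\gamma$ constraint. Your squeezing argument via the moment bounds $0 \le \E[z_i z_k] \le \E[z_i]$ and $0 \le \E[(1-z_i)z_k] \le \E[1-z_i]$ is actually more careful, as it covers the degenerate-marginal cases the paper glosses over; your remark that one needs $\cE \neq \emptyset$ is likewise a valid caveat absent from the paper's statement.
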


\begin{proof}
	Under the heterogeneous additive network effects model, an individually weighted linear estimator is unbiased for the average interference effect only if $\E[\est(\bw,\bv)] = \frac{1}{n} \sum_{(k,i) \in \cE} \gamma_{ki}$ is satisfied for any configuration of $\{\alpha_i\}_{i \in [n]}, \{\beta_i\}_{i \in [n]}$, and $\{\gamma_{ki}\}_{(k,i) \in \cE}$. This requirement results in the following $2n + |\cE|$ constraints, which arise from matching coefficients for each of the parameters,
	\begin{itemize}
		\item $\alpha$: for all $i \in [n]$, $w_i \E[z_i] + v_i \E[1 - z_i]= 0$,
		\item $\beta$: for all $i \in [n]$, $w_i \E[z_i] = 0$,
		\item $\gamma$: for all $(k,i) \in \cE$, $w_i \E[z_i z_k] + v_i \E[(1-z_i) z_k] = \frac{1}{n}$.
	\end{itemize}
	The first two constraints together require that the weights are all zero, i.e. $w_i = 0$ and $v_i = 0$. However, with zero weights it is impossible to satisfy the constraint arising from $\gamma$.
\end{proof}

Next we consider the scenario when we have estimates of the individual baselines so that we can subtract them from the measured outcomes when constructing our estimator. 
\begin{theorem} \label{thm:AIE_2}
	For any randomized design such that $\Prob(z_k = 1 | z_i = 0) = \rho_i$ for all $(k,i) \in \cE$ for some values of $\{\rho_i\}_{i \in [n]}$, the following simple estimator 
	\[ \widehat{\AIE}_{-\alpha} = \frac{1}{n}\sum_{i \in [n]} \frac{(1 - z_i)}{\rho_i \E[1-z_i]} (Y_i(z) - \alpha_i), \]
	produces an unbiased estimate for the average interference effect under heterogeneous additive network effects.
\end{theorem}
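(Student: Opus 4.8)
The plan is to reuse the coefficient-matching template from the proofs of Theorems~\ref{thm:AIE} and \ref{thm:TTE_2}, but now applied to the baseline-adjusted estimator so that the otherwise fatal constraint coming from the $\alpha$ parameters disappears. First I would write the generic baseline-adjusted individually weighted linear estimator $\est_{-\alpha}(\bw,\bv) = \sum_{i \in [n]}(w_i z_i + v_i(1-z_i))(Y_i(\bz) - \alpha_i)$ and substitute the model identity $Y_i(\bz) - \alpha_i = \beta_i z_i + \sum_{(k,i) \in \cE} \gamma_{ki} z_k$. Taking the expectation over the design and using $z_i^2 = z_i$ and $z_i(1-z_i)=0$, the coefficient of each $\beta_i$ collapses to $w_i \E[z_i]$ and the coefficient of each $\gamma_{ki}$ becomes $w_i \E[z_i z_k] + v_i \E[(1-z_i)z_k]$.

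For $\E[\est_{-\alpha}]$ to equal $\AIE = \frac{1}{n}\sum_{(k,i)\in\cE}\gamma_{ki}$ identically in all parameter values, I would match coefficients to obtain two families of constraints: $w_i \E[z_i] = 0$ for every $i$ (so that no direct effect leaks in), and $w_i \E[z_i z_k] + v_i \E[(1-z_i)z_k] = \tfrac{1}{n}$ for every edge $(k,i)\in\cE$. Assuming $\E[z_i] > 0$, the first family forces $w_i = 0$, which matches the intuition that only control units carry a clean interference signal: when $z_i = 0$ we have $Y_i(\bz) - \alpha_i = \sum_{(k,i)\in\cE}\gamma_{ki}z_k$ with the direct effect switched off. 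The edge constraints then reduce to $v_i \E[(1-z_i)z_k] = \tfrac{1}{n}$, giving $v_i = 1/(n\,\E[(1-z_i)z_k])$.

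The main obstacle, and the precise role of the hypothesis, is consistency: a single weight $v_i$ must simultaneously satisfy the edge constraint for every $k$ with $(k,i)\in\cE$, which demands that $\E[(1-z_i)z_k]$ be identical across all such $k$. Writing $\E[(1-z_i)z_k] = \Prob(z_i=0)\,\Prob(z_k=1 \mid z_i=0) = \E[1-z_i]\,\rho_i$, the assumed symmetry condition $\Prob(z_k=1 \mid z_i=0)=\rho_i$ for all $(k,i)\in\cE$ makes this quantity independent of $k$, so $v_i = 1/(n\,\rho_i\,\E[1-z_i])$ is well defined. Substituting $w_i=0$ and this $v_i$ into $\est_{-\alpha}$ yields exactly the claimed estimator, and the coefficient matching guarantees unbiasedness by construction.

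I would close by contrasting this with the unadjusted case of Theorem~\ref{thm:AIE}, where the $\alpha$ and $\beta$ constraints jointly force $w_i = v_i = 0$ and thereby preclude any unbiased estimator; subtracting the baselines removes the $\alpha$ constraint entirely and frees $v_i$ to absorb the interference term, which is exactly why access to baseline estimates turns an impossible estimand into an estimable one.
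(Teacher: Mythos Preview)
Your proposal is correct and follows essentially the same argument as the paper: both derive the $\beta$- and $\gamma$-coefficient constraints for the baseline-adjusted estimator, set $w_i=0$ from the $\beta$ constraint, solve $v_i = 1/(n\,\E[(1-z_i)z_k])$ from the $\gamma$ constraint, and invoke the hypothesis $\Prob(z_k=1\mid z_i=0)=\rho_i$ to make this expression independent of $k$ so that a single $v_i$ works for all incoming edges. Your added intuition (control units isolate the interference signal) and the closing contrast with Theorem~\ref{thm:AIE} are helpful commentary but not part of the paper's proof.
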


\begin{proof}
	After subtracting out the baseline parameters, the constraints for unbiasedness reduce to 
	\begin{itemize}
		\item $\beta$: for all $i \in [n]$, $w_i \E[z_i] = 0$,
		\item $\gamma$: for all $(k,i) \in \cE$, $w_i \E[z_i z_k] + v_i \E[(1-z_i) z_k] = \frac{1}{n}$.
	\end{itemize}
	Satisfying the constraints arising from $\beta$ and $\gamma$ results in
	\[w_i = 0 \text{ and  } v_i = \frac{1}{n \E[(1-z_i)z_k]} \text{ for all } (k,i) \in \cE.\]
	In order to ensure that such a valid $v_i$ exists, we would need that there exists some value $\rho_i$ for which $\Prob(z_k =1 | z_i=0) = \rho_i$ for all $(k,i) \in \cE$. Under this condition,
	\[\E[(1-z_i)z_k] = \rho_i \E[1-z_i].\]
\end{proof}

The condition that $\Prob(z_k = 1 | z_i = 0) = \rho_i$ for all $(k,i) \in \cE$ imposes symmetry in the randomized assignment amongst the neighbors of an individual. Completely randomized design or bernoulli randomization would satisfy this constraint. Alternatively a cluster based randomization would satisfy this constraint if the cluster assignments were also randomized, and neighbors of a unit had equal chance of being assigned to the same versus different cluster.

Under completely randomized design with a budget of treating $p$ fraction of the population, $\E[z_i] = p$ and $\rho_i = \frac{pn}{n - 1}$ for all $i \in [n]$. Furthermore, for large enough $n$ the empirical average baseline of the treated individuals would be similar to the population average baseline, such that this estimator could be computed using the population baseline estimates by
\begin{align*}
    \widehat{\AIE}_{-\alpha} &= \frac{n-1}{(1-p) pn^2}\sum_{i \in [n]} (1 - z_i) Y_i(z) - \frac{n-1}{(1-p) pn^2}\sum_{i \in [n]} (1 - z_i)\alpha_i \\
    &\approx \frac{n-1}{(1-p) pn^2}\sum_{i \in [n]} (1 - z_i) Y_i(z) - \frac{n-1}{ pn^2}\sum_{i \in [n]} \alpha_i.
\end{align*} 

\section{Variance for General Linear Estimators}

We can compute the variance of an individually weighted linear weighted estimator for commonly used randomizations as a function of the weights $w_i$ and $v_i$. This could help in choosing the randomization for a given estimator, or choosing the weights to balance between bias and variance for a fixed randomization and estimand. As the expressions given are for a general estimator, this can be used to compute variance for the above presented estimators for the ATE and AIE.

Consider the individually weighted linear weighted estimator of the form 
\[\est(\bw,\bv) = \sum_{i \in [n]} (w_i z_i Y_i(\bz) + v_i (1-z_i) Y_i(\bz)),\]
where $w_i$ and $v_i$ are not functions of the treatment vector $\bz$. When we have baseline estimates available, we would instead subtract them from the outcomes when designing the estimator resulting in 
\[ \est_{-\alpha}(\bw,\bv) = \textstyle\sum_{i \in [n]} \left(w_i z_i + v_i (1 - z_i)\right) (Y_i(\bz) - \alpha_i).\]
The variance of $\est_{-\alpha}(\bw,\bv)$ will in fact follow from our calculations of the variance of $\est(\bw,\bv)$ with the simplifying condition that all baseline parameters $\alpha_i$ will be set to zero in the variance calculations since we have already subtracted them from the outcomes in the estimator $\est_{-\alpha}(\bw,\bv)$. As such we provide the calculations for the variance of $\est(\bw,\bv)$ as this is strictly more general.
We define the following expressions
\[L_i = (w_i - v_i) \alpha_i + w_i \beta_i +  \sum_{k \in [n]} v_k \gamma_{ik} \Ind((i,k) \in \cE) \]
\[H_{ij} = (w_i - v_i) \gamma_{ji} \Ind((j,i) \in \cE) + (w_j - v_j) \gamma_{ij} \Ind((i,j) \in \cE).\]

By expanding the expressions for $Y_i(\bz)$ from the heterogeneous additive outcomes model and rearranging terms, we can rewrite the estimator in terms of $L_i$ and $H_{ij}$ according to
\begin{align*}
	\est(\bw,\bv) &= \sum_{i \in [n]} w_i z_i (\alpha_i + \beta_i + \sum_{k \in [n]} \gamma_{ki} \Ind((k,i) \in \cE) z_k) 
	+ \sum_{i \in [n]} v_i (1-z_i) (\alpha_i + \sum_{k \in [n]} \gamma_{ki} \Ind((k,i) \in \cE) z_k)   \\
	&= \sum_{i \in [n]} v_i \alpha_i
	+ \sum_{i \in [n]} ((w_i - v_i) \alpha_i + w_i \beta_i) z_i  
	+ \sum_{i \in [n]} \sum_{k \in [n]} (w_i - v_i) \gamma_{ki} \Ind((k,i) \in \cE) z_i z_k  \\
	&\qquad+ \sum_{i \in [n]} v_i (\sum_{k \in [n]} \gamma_{ki} \Ind((k,i) \in \cE) z_k)   \\
	&= \sum_{i \in [n]} v_i \alpha_i 
	+ \sum_{i \in [n]} ((w_i - v_i) \alpha_i + w_i \beta_i +  \sum_{k \in [n]} v_k \gamma_{ik} \Ind((i,k) \in \cE)) z_i \\
	&\qquad + \sum_{i < j \in [n]^2} ((w_i - v_i) \gamma_{ji} \Ind((j,i) \in \cE) + (w_j - v_j) \gamma_{ij} \Ind((i,j) \in \cE)) z_i z_j   \\
	&= \sum_{i \in [n]} v_i \alpha_i 
	+ \sum_{i \in [n]} L_i z_i + \sum_{i < j \in [n]^2} H_{ij} z_i z_j.
\end{align*}

As a result the variance of $\est(\bw,\bv)$ is given by
\begin{align*}
	\Var[\est(\bw,\bv)]
	&= \Var[\sum_{i \in [n]} L_i z_i]  + 2 \Cov[\sum_{i \in [n]} L_i z_i,\sum_{i < j \in [n]^2} H_{ij} z_i z_j] + \Var[\sum_{i < j \in [n]^2} H_{ij} z_i z_j]   \\
	&= \sum_{i,j \in [n]^2} L_i L_j \Cov[z_i, z_j] + 2 \sum_{i \in [n]} \sum_{j < k \in [n]^2} L_i H_{jk} \Cov[z_i, z_j z_k]  \\
	&\qquad+ \sum_{i < j \in [n]^2} \sum_{k < \ell \in [n]^2} H_{ij} H_{k \ell} \Cov[z_i z_j, z_k z_\ell]   \\
	&= \sum_{i,j \in [n]^2} L_i L_j \Cov[z_i, z_j] + 2 \sum_{i \in [n]} \sum_{j < k \in [n]^2} L_i H_{jk} \Cov[z_i, z_j z_k]   \\
	&\qquad+ \sum_{i < j \in [n]^2} \sum_{k < \ell \in [n]^2} H_{ij} H_{k \ell} \Cov[z_i z_j, z_k z_\ell]   \\
	&= \sum_{i,j \in [n]^2} L_i L_j \Cov[z_i, z_j] + 2 \sum_{i \in [n]} \sum_{j < k \in [n]^2} L_i H_{jk} \Cov[z_i, z_j z_k]   \\
	&\qquad+ \sum_{i < j \in [n]^2} \sum_{k < \ell \in [n]^2} H_{ij} H_{k \ell} \Cov[z_i z_j, z_k z_\ell] .
\end{align*}

For specific randomized designs, we simply plug in the expressions for the moments of the assignment vector $\bz$.

\subsection{Completely Randomized Design}

Consider the completely randomized design, which generates the treatment assignment vector $\bz$ by selecting a subset of $pn$ units to treat uniformly at random out of the size $n$ population. The second, third, and fourth moments take value:

\begin{align*}
	\Cov[z_i, z_j] &= \begin{cases}
		p (1-p) &\text{ if } i = j \\
		- \frac{p(1-p)}{n-1} &\text{ if } i \neq j
	\end{cases} \\
	\Cov[z_i, z_j z_k] &= \begin{cases}
		- 2 p(1-p) (\frac{np-1}{(n-1)(n-2)}) &\text{ if } i \notin \{j,k\}, j<k \\
		p(1-p)(\frac{np-1}{n-1}) &\text{ if } i \in \{j,k\}, j < k
	\end{cases} \\
	\Cov[z_i z_j, z_k z_{\ell}] &= \begin{cases}
		p (\frac{np-1}{n-1}) (1 - p (\frac{np-1}{n-1})) &\text{ if } (i,j) = (k,\ell), i < j \\
		p(\frac{np-1}{n-1}) (\frac{np-2}{n-2} - p(\frac{np-1}{n-1})) &\text{ if } |\{i,j,k,\ell\}| = 3 \\
		p(\frac{np-1}{n-1}) (\frac{(np-2)(np-3)}{(n-2)(n-3)} - p(\frac{np-1}{n-1})) &\text{ if } |\{i,j,k,\ell\}| = 4
	\end{cases}
\end{align*}

The variance calculations follow by plugging the expressions for the moments and rearranging the expression algebraically,
\begin{align*}
	\Var[\est(\bw,\bv)]
	&= (1-p) p \frac{n^2}{n-1} \big(\frac{1}{n} \sum_{i} L_i^2 - \big(\frac{1}{n}\sum_{i} L_i\big)^2\big) \\
	&\qquad+ \frac{(1-p) p (n p -1)}{(n-1)(n-2)}\big( n \sum_{j < k \in [n]^2} (L_j + L_k) H_{jk}  - 2 \sum_{i} L_i \sum_{j < k \in [n]^2} H_{jk} \big) \\
	&\qquad+ p \big(\frac{np-1}{n-1}\big)\big(\big(\frac{np-2}{n-2}\big)\big(\frac{np-3}{n-3}\big) - p \big(\frac{np-1}{n-1}\big)\big) \big(\sum_{i < j \in [n]^2} H_{ij}\big)^2 \\
	&\qquad+ \frac{np(1-p)(np-1)(np-2)}{(n-1)(n-2)(n-3)} \sum_{i \in [n]} \big(\sum_{j \neq i \in [n]} H_{ij}\big)^2  \\
	&\qquad+ \frac{np(1-p)(np-1)}{(n-1)(n-2)} \big(1-\frac{np-2}{n-3}\big)\sum_{i<j \in [n]^2} H_{ij}^2 .
\end{align*}

\subsection{Cluster RD}

Consider the cluster randomized design, which partitions the population into clusters. Each cluster is either fully treated or fully control. The treatment assignment vector is generated by selecting a subset of $pT$ clusters to treat uniformly at random amongst the $T$ clusters.

We can analyze cluster RD using the equations from CRD. Let $z' \in \{0,1\}^T$ denote the cluster treatment vector such that $z_i = z'_{\pi(i)}$. Then the estimator can be written as a sum of $z'_{\tau}$ for $\tau \in T$, in particular
\[\sum_{i \in [n]} L_i z_i
= \sum_{i \in [n]} L_i z_i \sum_{\tau \in [T]} \Ind(\pi(i) = \tau) 
= \sum_{\tau \in [T]} \left(\sum_{i: \pi(i) = \tau} L_i\right) z'_{\tau}\]
and
\begin{align*}
	\sum_{i < j \in [n]^2} H_{ij} z_i z_j
	&= \sum_{i < j \in [n]^2} H_{ij} z_i z_j \sum_{\tau \leq \tau' \in [T]^2} \Ind(\pi(i) = \tau, \pi(j) = \tau' \text{ or } \pi(i) = \tau', \pi(j) = \tau) \\
	&= \sum_{\tau \in [T]} \left(\sum_{i < j: \pi(i) = \pi(j) = \tau} H_{ij} \right) z'_{\tau}
	+ \sum_{\tau < \tau' \in [T]^2} \left(\sum_{i < j: \{\pi(i),\pi(j)\} = \{\tau,\tau'\}} H_{ij} \right) z'_{\tau} z'_{\tau'}.
\end{align*}

If we define the notation
\[L'_{\tau} = \sum_{i: \pi(i) = \tau} L_i 
~~\text{ and }~~ H'_{\tau \tau'} = \sum_{i < j: \{\pi(i),\pi(j)\} = \{\tau,\tau'\}} H_{ij},\]
it follows that
\[\est(\bw,\bv)
= \sum_{i \in [n]} v_i \alpha_i 
+ \sum_{\tau \in [T]} (L'_{\tau} + H'_{\tau \tau}) z'_{\tau} + \sum_{\tau < \tau' \in [T]^2} H'_{\tau \tau'} z'_{\tau} z'_{\tau'}.\]

We can see that the estimator looks essentially the same as an estimator which operates at the level of the clusters with the modified parameters $L'$ and $H'$. Therefore, the variance can be computed by substituting the relevant quantities into the variance calculations for completely randomized design. 
\begin{align*}
	\Var[\est(\bw,\bv)]
	&= (1-p) p \frac{T^2}{T-1} \left(\frac{1}{T} \sum_{\tau} (L'_{\tau}+H'_{\tau \tau})^2 - \left(\frac{1}{T}\sum_{\tau} L'_{\tau}\right)^2\right) \\
	&\qquad+ \frac{T (1-p) p (T p -1)}{(T-1)(T-2)} \sum\limits_{\tau < \tau' \in [T]^2} (L'_{\tau} + H'_{\tau \tau} + L'_{\tau'} + H'_{\tau' \tau'}) (H'_{\tau, \tau'} + H'_{\tau', \tau}) \\
	&\qquad- \frac{2 (1-p) p (T p - 1)}{(T-1)(T-2)} \sum_{\tau''} (L'_{\tau''} + H'_{\tau'' \tau''}) \sum_{\tau < \tau' \in [T]^2} (H'_{\tau, \tau'} + H'_{\tau', \tau}) \\
	&\qquad+ p \big(\frac{Tp-1}{T-1}\big)\big(\big(\frac{Tp-2}{T-2}\big)\big(\frac{Tp-3}{T-3}\big) - p \big(\frac{Tp-1}{T-1}\big)\big) (\sum\limits_{\tau < \tau' \in [T]^2} (H'_{\tau \tau'} + H'_{\tau' \tau}))^2\\
	&\qquad+ \frac{Tp(1-p)(Tp-1)(Tp-2)}{(T-1)(T-2)(T-3)} \sum_{\tau \in [T]} \big(\sum_{\tau' \neq \tau \in [T]} (H'_{\tau \tau'} + H'_{\tau' \tau}) \big)^2  \\
	&\qquad+ \frac{Tp(1-p)(Tp-1)}{(T-1)(T-2)} \left(1-\frac{Tp-2}{T-3}\right)\sum_{\tau < \tau' \in [T]^2} (H'_{\tau \tau'} + H'_{\tau' \tau})^2.
\end{align*}

\subsection{Cluster stratified RD}

Consider the cluster stratified randomized design, which partitions the population into clusters, and within each cluster a specified fraction of the units are treated uniformly at random amongst all units in that cluster. The assignments in different clusters are fully independent. We can rearrange the estimator and use the independence of assignments across different clusters to simplify the variance calculations. Using the definitions of $L_i$ and $H_{ij}$,
\[\est(\bw,\bv)
= \sum_{i \in [n]} v_i \alpha_i
+ \sum_{\tau \in [T]} \left(\sum_{i: \pi(i) = \tau} L_i z_i + \sum_{i < j: \pi(i) = \pi(j) = \tau} H_{ij} z_i z_j\right) 
+ \sum_{\tau < \tau' \in [T]^2} \left(\sum_{i < j: \{\pi(i),\pi(j)\} = \{\tau,\tau'\}} H_{ij} z_i z_j\right).\]
Let us define the notation
\[A_{\tau} = \sum_{i: \pi(i) = \tau} L_i z_i + \sum_{i < j: \pi(i) = \pi(j) = \tau} H_{ij} z_i z_j 
~~\text{ and }~~ B_{\tau \tau'} = \sum_{i < j: \{\pi(i),\pi(j)\} = \{\tau,\tau'\}} H_{ij} z_i z_j.\]
Because the assignments are independent across different clusters, the variance calculation reduces to
\begin{align*}
	\Var[\est(\bw,\bv)]
	&= \sum_{\tau \in [T]} \Var[A_{\tau}]
	+ \sum_{\tau \in [T]} \sum_{\tau' \neq \tau \in [T]} \Cov[A_{\tau}, B_{\tau \tau'}] 
	+ \sum_{\tau < \tau' \in [T]^2} \Var[B_{\tau \tau'}] \\
	&\qquad+  \sum_{\tau} \sum_{\tau' \neq \tau'', \tau \notin \{\tau' \tau''\}} \Cov[B_{\tau \tau'}, B_{\tau \tau''}]. 
\end{align*}

Observe that $\Var\left[A_{\tau}\right] $ is the same as the variance calculations for completely randomized design restricted to the cluster $\tau$, such that
\begin{align*}
	\Var\left[A_{\tau}\right]
	&= (1-p_{\tau}) p_{\tau} \frac{n_{\tau}^2}{n_{\tau}-1} \big(\frac{1}{n_{\tau}} \sum_{i: \pi(i) = \tau} L_i^2 - \big(\frac{1}{n_{\tau}}\sum_{i: \pi(i) = \tau} L_i\big)^2\big) \\
	&\qquad+ \frac{2(1-p_{\tau}) p_{\tau} (n_{\tau} p_{\tau} -1)}{(n_{\tau}-1)(n_{\tau}-2)}\big( n_{\tau} \sum_{i<j: \pi(i) = \pi(j) = \tau} L_i H_{ij}  - \sum_{k: \pi(k) = \tau} L_k \sum_{i<j: \pi(i) = \pi(j) = \tau} H_{ij} \big) \\
	&\qquad+ p_{\tau} \left(\frac{n_{\tau}p_{\tau}-1}{n_{\tau}-1}\right)\big(\big(\frac{n_{\tau}p_{\tau}-2}{n_{\tau}-2}\big)\big(\frac{n_{\tau}p_{\tau}-3}{n_{\tau}-3}\big) - p_{\tau} \big(\frac{n_{\tau}p_{\tau}-1}{n_{\tau}-1}\big)\big) \big(\sum_{i<j: \pi(i) = \pi(j) = \tau} H_{ij}\big)^2\\
	&\qquad+ \frac{n_{\tau}p_{\tau}(1-p_{\tau})(n_{\tau}p_{\tau}-1)(n_{\tau}p_{\tau}-2)}{(n_{\tau}-1)(n_{\tau}-2)(n_{\tau}-3)} \sum_{i: \pi(i) = \tau} \big(\sum_{j \neq i: \pi(j) = \tau} H_{ij}\big)^2  \\
	&\qquad+ \frac{n_{\tau}p_{\tau}(1-p_{\tau})(n_{\tau}p_{\tau}-1)}{(n_{\tau}-1)(n_{\tau}-2)} \left(1-\frac{n_{\tau}p_{\tau}-2}{n_{\tau}-3}\right)\sum_{i<j: \pi(i) = \pi(j) = \tau} H_{ij}^2 .
\end{align*}

For $\tau \neq \tau'$,
\begin{align*}
	\Cov[A_{\tau}, B_{\tau \tau'}]
	&= \frac{p_{\tau} (1-p_{\tau}) p_{\tau'}}{n_{\tau}-1} \Big(n_{\tau} \sum_{i,k: \pi(i) = \tau,\pi(k) - \tau'} L_i H_{ik}
	- \sum_{i: \pi(i) = \tau} L_i \sum_{h,k: \pi(h) = \tau, \pi(k) = \tau'} H_{hk} \\
	&\qquad \qquad \qquad \qquad \qquad + \frac{n_{\tau} (n_{\tau} p_{\tau}-1)}{n_{\tau}-2} \sum_{i<j: \pi(i) = \pi(j) = \tau} \sum_{k: \pi(k) = \tau'} H_{ij} (H_{ik} + H_{jk}) \\
	&\qquad \qquad \qquad  \qquad \qquad- \frac{2(n_{\tau} p_{\tau}-1)}{n_{\tau} - 2} \sum_{i<j: \pi(i) = \pi(j) = \tau} H_{ij} \sum_{h,k: \pi(h) = \tau, \pi(k) = \tau'} H_{hk} \\
	&\qquad \qquad \qquad \qquad \qquad- \frac{2(n_{\tau} p_{\tau}-1)}{n_{\tau} - 2} \sum_{i<j: \pi(i) = \pi(j) = \tau} H_{ij} \sum_{h,k: \pi(h) = \tau, \pi(k) = \tau'} H_{hk} \Big).
\end{align*}

For $\tau \neq \tau'$,
\begin{align*}
	&\Var[B_{\tau \tau'}] \\
	&= \frac{n_{\tau}p_{\tau}(1-p_{\tau})n_{\tau'}p_{\tau'}(1-p_{\tau'})}{(n_{\tau}-1)(n_{\tau'}-1)} \sum_{h,k: \pi(h) = \tau, \pi(k) = \tau'} H_{hk}^2 + \frac{p_{\tau} p_{\tau'} n_{\tau'}(1-p_{\tau'})(n_{\tau} p_{\tau} -1)}{(n_{\tau}-1)(n_{\tau'}-1)} \sum_{k: \pi(k) = \tau'} \left(\sum_{i: \pi(i) = \tau} H_{ik} \right)^2 \\
	&\qquad+ \frac{p_{\tau} p_{\tau'} n_{\tau}(1-p_{\tau})(n_{\tau'} p_{\tau'} - 1)}{(n_{\tau}-1)(n_{\tau'}-1)} \sum_{i: \pi(i) =\tau} \left(\sum_{k: \pi(k) = \tau'} H_{ik} \right)^2 \\
	&\qquad+ p_{\tau} p_{\tau'} \left(\left(\frac{n_{\tau'} p_{\tau'} - 1}{n_{\tau'} -1}\right) \left(\frac{n_{\tau}p_{\tau} - 1}{n_{\tau}-1}\right) - p_{\tau} p_{\tau'}\right) \left(\sum_{i,j: \pi(i) = \tau, \pi(j) = \tau'} H_{ij} \right)^2.
\end{align*}

For a distinct triple $\tau \neq \tau' \neq \tau''$,
\begin{align*}
	&\Cov[B_{\tau \tau'}, B_{\tau \tau''}] \\
	&= \frac{p_{\tau} (1-p_{\tau}) p_{\tau'} p_{\tau''}}{n_{\tau}-1}\Big(&n_{\tau} \sum_{i,k,h: \pi(i) = \tau, \pi(k) = \tau', \pi(h) = \tau''} H_{ik} H_{ih} 
	- \sum_{i, k: \pi(i) = \tau, \pi(k) = \tau'} H_{ik} \sum_{j,h: \pi(j) = \tau, \pi(h) = \tau''} H_{ih}\Big).
\end{align*}
The final variance of the estimator results from combining these expressions together.

\end{document}